\documentclass{article}

\usepackage[preprint]{ProbNum26} 
\usepackage{amsmath}
\usepackage{amsfonts}

\usepackage[round]{natbib}

\usepackage[capitalise,nameinlink]{cleveref}

\usepackage{amsthm}
\newtheorem{lemma}{Lemma}[section]  

\AddToHook{cmd/appendix/before}{%
    \crefalias{section}{appendix}%
    \crefalias{subsection}{appendix}
}

\usepackage{enumitem}

\newlist{enumeratlem}{enumerate}{1}
\setlist[enumeratlem]{label=\arabic*., ref=\thelemma.\arabic*}
\crefname{enumeratlemi}{Lemma}{Lemmas}

\DeclareMathOperator*{\argmax}{arg\,max}
\DeclareMathOperator*{\argmin}{arg\,min}
\usepackage{bm}

\usepackage{array}
\usepackage{booktabs}

\newcolumntype{L}[1]{>{\raggedright\arraybackslash}p{#1}}

\newcolumntype{R}[1]{>{\raggedleft\arraybackslash$}p{#1}<{$}}

\probnumtitle{Modified Bryson-Frazier Smoothing and Hyperparameter Learning for Temporal Gaussian Process Regression}
\probnumauthors{%
\name{Tom Colemont}%
\affiliation{Leuven Gravity Institute, Department of Electrical Engineering (STADIUS),  KU Leuven, Belgium}
\and%
\name{Brecht Evens}%
\affiliation{Leuven Gravity Institute, Department of Electrical Engineering (STADIUS), KU Leuven, Belgium}
\and%
\name{Tjonnie G.-F. Li}%
\affiliation{Leuven Gravity Institute, Department of Physics \& Astronomy, KU Leuven, Belgium}
\and%
\name{Frederik De Ceuster}%
\affiliation{Leuven Gravity Institute, Department of Physics \& Astronomy, KU Leuven, Belgium}%
\\[2pt]
}

\probnumabstract{One-dimensional Gaussian processes with stationary, integrable kernel functions admit exact or arbitrarily accurate state-space representations, enabling linear-time inference through Kalman filtering and Rauch-Tung-Striebel (RTS) smoothing. However, the RTS smoother requires inversion of predicted state covariance matrices, which can become ill-conditioned and may therefore lead to numerical instabilities. In this work, we revisit the modified Bryson-Frazier (MBF) smoother as an alternative to the RTS smoother for Gaussian process regression in its state-space representation. In addition to reducing computational cost and memory requirements, the MBF smoother computes the same posterior distributions as the RTS smoother while avoiding the problematic covariance matrix inversion and the associated numerical instabilities. Furthermore, we demonstrate that the intermediate quantities computed by the MBF smoother can be reused to compute gradients of the negative log marginal likelihood, enabling kernel hyperparameter learning with minimal additional cost. Together, these results establish the MBF smoother as a unified and numerically robust approach to inference and kernel hyperparameter learning for one-dimensional Gaussian process regression.}

\begin{document}

\section{Introduction}
\label{sec:introduction}
\setlength{\parskip}{8pt}
Gaussian processes (GPs) provide flexible and nonparametric models for regression, and play a central role in probabilistic numerics (PN).
However, standard GP inference has a cubic complexity in the number of data points, limiting scalability \citep{Rasmussen_GP}.
A common strategy to overcome this limitation is to leverage the class of Markovian kernel functions.
For such kernels, GP regression can be reformulated as inference in an equivalent linear Gaussian state-space model, enabling linear-time inference through Bayesian filtering and smoothing \citep{Sarkka_SSM}.

Since these state-space models are linear and Gaussian, the filtering and smoothing distributions admit closed-form recursions: the Kalman filter and Rauch-Tung-Striebel (RTS) smoother, respectively \citep{kalman1960new, rauch1965maximum}. Although exact finite-dimensional state-space representations are available only for Markovian kernels, any one-dimensional GP with a stationary, integrable kernel can be approximated arbitrarily well by such models \citep{LEG_model}. The combination of linear-time inference and broad kernel applicability has made this approach widely used in PN, particularly in probabilistic ODE solvers \citep{Tronarp_Probabilistic_ODEs}.

Despite reducing the computational complexity in the number of data points from cubic to linear, Kalman filtering and RTS smoothing remain cubic in the state-space dimension.
This dependence on the state-space dimension becomes problematic in high-dimensional settings, such as spatio-temporal regression. In response, computation-aware Kalman filters and smoothers have been developed that approximate both the filtering and smoothing distributions while tracking the induced approximation errors \citep{pfortner2024computation}.

A second, distinct limitation concerns numerical stability. The Kalman filter and, in particular, the RTS smoother may suffer from numerical instabilities when state covariance matrices become ill-conditioned or effectively singular. Such regimes arise naturally when PN methods aim to compute highly accurate solutions, where the posterior uncertainty may become very small. This is especially problematic for RTS smoothing, which requires inversion of the predicted state covariance matrix. When this matrix is ill-conditioned or singular, the RTS recursions may become unstable and break down. Coordinate transformations and square-root formulations can mitigate these issues, but incur additional computational cost \citep{Kraemer_Stable}.

In this paper, we revisit a classical alternative to the RTS smoother from control and estimation theory: the modified Bryson-Frazier (MBF) smoother.
Originally proposed by \citet{Bierman1973}, this reformulation of the RTS smoother avoids the problematic inversion of the predicted covariance matrix, while additionally reducing computational and memory requirements. Closely related smoothing formulations have appeared under different names, including the \textit{disturbance smoother} of \citet{kohn1989fast} and, more recently, the \textit{inversion}\textit{-free RTS smoother} in \citet{pfortner2024computation} in the context of computation-aware methods, where its favorable numerical properties were also noted.

This paper makes the following contributions: \setlength{\parskip}{.5pc}
\begin{enumerate}
    \item We present the MBF smoother as a numerically stable alternative to the RTS smoother for state-space inference in PN, retaining linear complexity.
    \item We provide a unified dynamic-programming view of Kalman filtering, RTS, and MBF smoothing, showing that the MBF recursions arise as the adjoint equations of the same optimization problem.
    \item We show that the MBF variables can be reused to compute gradients of the negative log marginal likelihood, enabling efficient hyperparameter optimization with minimal additional cost. 
\end{enumerate}
Together, these results establish the MBF smoother as a unified and numerically robust approach to inference and hyperparameter optimization for one-dimensional GP regression. Although we focus on regression for clarity, all results naturally extend to probabilistic numerical methods based on state-space inference.

\section{Inference in State-Space Models}
\label{sec:inference_in_state_space_models}
State-space methods provide an efficient computational framework for performing inference in GP models with Markovian structure. This section summarizes the conventional filtering and smoothing algorithms, and discusses their numerical aspects.

As shown by \citet{Sarkka_SSM, LEG_model}, any one-dimensional GP with a stationary, integrable kernel can be represented either exactly or approximated arbitrarily well by a linear Gaussian state-space model,
\begin{align}
    \mathbf{x}_{k} &= \mathbf{A}_{k-1} \mathbf{x}_{k-1} + \mathbf{q}_{k-1} \\
    \mathbf{y}_k &= \mathbf{H}_k \mathbf{x}_k + \mathbf{r}_k
\end{align}
where $\mathbf{x}_k \in \mathbb{R}^d$ denotes the latent state at time step $k$, $\mathbf{y}_k \in \mathbb{R}^m$ the measurement, and $\mathbf{q}_k$ and $\mathbf{r}_k$ the process and measurement noise, respectively,
\begin{align}
    \mathbf{q}_k &\sim \mathcal{N}(\mathbf{0}, \mathbf{Q}_k) \\
    \mathbf{r}_k &\sim \mathcal{N}(\mathbf{0}, \mathbf{R}_k)
\end{align}
with covariance matrices $\mathbf{Q}_k \in \mathbb{R}^{d \times d}$ and $\mathbf{R}_k \in \mathbb{R}^{m \times m}$. Together with the transition matrix $\mathbf{A}_{k} \in \mathbb{R}^{d \times d}$ and measurement matrix $\mathbf{H}_k \in \mathbb{R}^{m \times d}$, these quantities define the state-space model. This equivalent representation reformulates GP regression as a state estimation problem, which can be solved efficiently through Kalman filtering and RTS smoothing \citep{Sarkka_BFS}.

\subsection{Kalman Filtering}
For linear Gaussian state-space models, Bayesian filtering admits a closed-form solution, known as the Kalman filter \citep{kalman1960new}. The filter starts from an initial distribution $\mathbf{x}_0 \sim \mathcal{N}(\mathbf{m}_0, \mathbf{P}_0)$, where $\mathbf{P}_0$ is the stationary prior covariance. Assuming that $\mathbf{x}_0$, $\mathbf{q}_k$ and $\mathbf{r}_k$ are mutually independent, it proceeds in two steps:

\textbf{Predict:} the predicted distribution of the latent state is recursively determined as:
\begin{align}
    \mathbf{x}_k \mid \mathbf{y}_{1:k-1} &\sim \mathcal{N}(\mathbf{m}_k^-, \mathbf{P}_k^-) \\
    \mathbf{m}_{k}^- &= \mathbf{A}_{k-1} \mathbf{m}_{k-1}^{}\\
    \mathbf{P}_{k}^- &= \mathbf{A}_{k-1} \mathbf{P}_{k-1} \mathbf{A}_{k-1}^\top + \mathbf{Q}_{k-1}
\end{align}

\textbf{Filter:} the filtered distribution of the latent state incorporates a new measurement and reads:
\begin{align}
    \mathbf{x}_k \mid \mathbf{y}_{1:k} &\sim \mathcal{N}(\mathbf{m}_k, \mathbf{P}_k) \\
    \mathbf{z}_k &= \mathbf{y}_k - \mathbf{H}_k \mathbf{m}_{k}^- \\
    \mathbf{S}_k &= \mathbf{H}_k \mathbf{P}_{k}^- \mathbf{H}_k^\top + \mathbf{R}_k \\
    \mathbf{K}_k &= \mathbf{P}_{k}^- \mathbf{H}_k^\top \mathbf{S}_k^{-1} \\
    \mathbf{m}_{k} &= \mathbf{m}_{k}^- + \mathbf{K}_k \mathbf{z}_k \\ \label{eq:cov_KF}
    \mathbf{P}_{k} &= (\mathbf{I} - \mathbf{K}_k \mathbf{H}_k) \mathbf{P}_{k}^-
\end{align}
If no measurement is available at time step $k$, the update step is omitted and the filtered distribution coincides with the predicted distribution, effectively extrapolating the state estimate based on prior information. In conclusion, the Kalman filter computes the distribution of the latent state at time step $k$, conditioned on all measurements up to and including that time step.

\subsection{Rauch-Tung-Striebel Smoothing}
Bayesian smoothing determines the smoothing distribution of the latent state at time step $k$ conditioned on all measurements up to and including a final time step~$T$. For linear Gaussian state-space models, it admits a closed-form solution, known as the RTS smoother \citep{rauch1965maximum}. Starting from the final filtered state $\mathbf{x}_T \sim \mathcal{N}(\mathbf{m}_T, \mathbf{P}_T)$, a backward pass from $k=T-1 \ldots 0$ uses the predicted and filtered distributions to compute the smoothing distributions as:
\begin{align}
    \mathbf{x}_k \mid \mathbf{y}_{1:T} &\sim \mathcal{N}(\mathbf{m}_{k}^s, \mathbf{P}_{k}^s) \\
    \mathbf{J}_k &= \mathbf{P}_{k} \mathbf{A}_k^\top [\mathbf{P}_{k+1}^-]^{-1} \\
    \mathbf{m}_{k}^s &= \mathbf{m}_{k} + \mathbf{J}_k (\mathbf{m}_{k+1}^s - \mathbf{m}_{k+1}^-) \\ \label{eq:cov_RTS}
    \mathbf{P}_{k}^s &= \mathbf{P}_{k} + \mathbf{J}_k (\mathbf{P}_{k+1}^s - \mathbf{P}_{k+1}^-) \mathbf{J}_k^\top
\end{align}

\subsection{Numerical Considerations}
Although Kalman filtering and RTS smoothing provide exact inference for linear Gaussian state-space models, their practical implementation may suffer from numerical instabilities.

A first source of numerical instability arises in the covariance update equations of both the Kalman filter and the RTS smoother. In their standard form, these updates involve repeated subtractions of positive definite matrices. Due to the accumulation of numerical errors, the resulting matrices may lose positive definiteness. In addition, the standard Kalman filter covariance update does not enforce symmetry by construction. To mitigate these issues, the covariance updates can be reformulated in their so-called Joseph form \citep{Kraemer_Stable}:
\begin{align}
    \mathbf{B}_k &= \mathbf{I} - \mathbf{K}_k \mathbf{H}_k \\
    \mathbf{G}_k &= \mathbf{I} - \mathbf{J}_k \mathbf{A}_k \\
    \mathbf{P}_{k} &={} \mathbf{B}_k \mathbf{P}_{k}^- \mathbf{B}_k^\top + \mathbf{K}_k \mathbf{R}_k \mathbf{K}_k^\top \\
\begin{split}
    \mathbf{P}_{k}^s &={} \mathbf{G}_k \mathbf{P}_{k} \mathbf{G}_k^\top +  \mathbf{J}_k \mathbf{Q}_k \mathbf{J}_k^\top + \mathbf{J}_k \mathbf{P}_{k+1}^s  \mathbf{J}_k^\top
\end{split}
\end{align}
These forms avoid the subtraction of covariance matrices and preserve symmetry by construction, but increase the computational cost. As shown in \Cref{app:flop_count}, this increase is modest for the Kalman filter but becomes significant for the RTS smoother. Moreover, the symmetric Joseph form makes it straightforward to derive its square-root implementations, which further improve numerical properties at the expense of additional computational cost. In this work, square-root implementations are not studied further.

A second and more fundamental source of instability arises from the matrix inversions required by the RTS smoother. While filtering only requires the innovation covariance $\mathbf{S}_k$ to be non-singular, RTS smoothing additionally requires the predicted state covariance matrix $\mathbf{P}_{k+1}^-$ to be non-singular. This requirement may be violated in practice when PN methods operate in high-accuracy regimes where the filtered covariance matrix $\mathbf{P}_k$ becomes negligible, such that the predicted covariance matrix $\mathbf{P}_{k+1}^- = \mathbf{A}_k \mathbf{P}_k \mathbf{A}_k^\top + \mathbf{Q}_k$ is dominated by the process noise covariance $\mathbf{Q}_k$. For small time steps, this matrix may become ill-conditioned, potentially leading to numerical instabilities or breakdown of the RTS equations \citep{Kraemer_Stable}.

\section{Inference in State-Space Models through Dynamic Programming}
\label{sec:dynamic_programming}
This section formulates an optimization view of Bayesian smoothing in linear Gaussian state-space models to derive an alternative smoothing algorithm that mitigates the numerical issues of the RTS smoother. Because the smoothing posterior is Gaussian, it is fully characterized by a quadratic negative log-posterior: its minimizer gives the posterior mean trajectory, while its curvature determines the smoothing covariances. Writing smoothing as a maximum-a-posteriori (MAP) estimation problem reveals a dynamic-programming structure \citep{Sarkka_BFS}. The Kalman filter corresponds to the forward elimination of past states, while the backward pass can be interpreted in two equivalent ways. One interpretation leads to the classical RTS smoother through backward substitution; the other leads to the MBF smoother through adjoint equations.

Although this viewpoint is well known \citep[see e.g.][]{Cox_DP}, it provides a unifying framework for interpreting different smoothing algorithms. In particular, it shows that the MBF variables are not merely an algebraic reformulation of RTS quantities, but arise as adjoint variables of the same optimization problem. This interpretation will be central in \Cref{sec:hyperparameter_optimization} where these quantities are reused for gradient-based hyperparameter learning. Full derivations are deferred to \Cref{app:dynamic_programming}.

\subsection{Maximum-A-Posteriori Estimation}
Consider the posterior distribution over the state trajectory. The corresponding MAP estimator is given by:
\begin{align}
    \mathbf{x}_{0:T}^{\text{MAP}} &= \argmax_{\mathbf{x}_{0:T}} p(\mathbf{x}_{0:T} \mid \mathbf{y}_{1:T})\\
    &= \argmax_{\mathbf{x}_{0:T}}p(\mathbf{x}_{0}) \prod_{k=1}^{T}p(\mathbf{y}_k \mid \mathbf{x}_k) p(\mathbf{x}_k \mid \mathbf{x}_{k-1})
\end{align}
Equivalently, this can be written as the minimization of the negative log-posterior  \citep{Sarkka_BFS}: 
\begin{align}\label{eq:MAP}
    \mathbf{x}_{0:T}^{\text{MAP}} = \argmin_{\mathbf{x}_{0:T}}L(\mathbf{x}_{0:T})
\end{align}
where $L(\mathbf{x}_{0:T})$ represents the loss function, defined as:
\begin{align}
    L(\mathbf{x}_{0:T}) =&  -\log \left[p(\mathbf{x}_0) \prod_{k=1}^T p(\mathbf{y}_k | \mathbf{x}_k) p(\mathbf{x}_k | \mathbf{x}_{k-1}) \right] \\ \label{eq:cost_function}
    \begin{split}
        =& -\log p(\mathbf{x}_0) - \sum_{k=1}^T \log p(\mathbf{y}_k |  \mathbf{x}_k) \\
        &\qquad \qquad \quad  - \sum_{k=1}^T \log p(\mathbf{x}_k | \mathbf{x}_{k-1})    
    \end{split}
\end{align}
Under the linear Gaussian state-space model specified in \Cref{sec:inference_in_state_space_models}, the loss function becomes:
\begin{align}\label{eq:L_cost_function}
\begin{split}
    L(\mathbf{x}_{0:T}) = C  + \frac{1}{2} &\| \mathbf{x}_0 - \mathbf{m}_0 \|_{\mathbf{P}_0^{-1}}^2 \\
    + \ \frac{1}{2} \sum_{k=1}^\top&\| \mathbf{y}_k - \mathbf{H}_k \mathbf{x}_k \|_{\mathbf{R}_k^{-1}}^2\\
    + \ \frac{1}{2} \sum_{k=1}^\top  &\| \mathbf{x}_{k} - \mathbf{A}_{k-1} \mathbf{x}_{k-1} \|_{\mathbf{Q}^{-1}_{k-1}}^2
\end{split}
\end{align}
where $C$ is a constant independent of the state trajectory. This is a sparse quadratic optimization problem, where the sparsity is induced by the Markovian structure of the state-space model. As a result, the optimal trajectory and associated Gaussian marginals can be computed efficiently using dynamic programming.

\subsection{Forward Dynamic Programming}
Dynamic programming solves the MAP estimation problem by decomposing it into a sequence of local subproblems. In the forward pass, past states are recursively eliminated while the cost is accumulated. This procedure recovers the predicted and filtered means and covariances as determined by the Kalman filter.

For the first time step, one minimizes the following loss function with respect to $\mathbf{x}_0$ as a function of $\mathbf{x}_1$,
\begin{align}
    \min_{\mathbf{x}_0} \frac{1}{2} \| \mathbf{x}_0 - \mathbf{m}_0 \|_{\mathbf{P}_0^{-1}}^2 + \frac{1}{2} \| \mathbf{x}_1 - \mathbf{A}_0 \mathbf{x}_0 \|_{\mathbf{Q}_0^{-1}}^2
\end{align}
The resulting minimizer expresses $\mathbf{x}_0$ conditionally as a function of $\mathbf{x}_1$, which we denote as the conditional mean $\mathbf{m}_0^c(\mathbf{x}_{1})$. For subsequent steps $k=1\ldots T-1$, a similar elimination step results in the conditional means $\mathbf{m}^c_k(\mathbf{x}_{k+1})$ by minimizing: 
\begin{align}
\begin{split}
    \min_{\mathbf{x}_k} \frac{1}{2} \| \mathbf{y}_k - \mathbf{H}_k \mathbf{x}_k \|_{\mathbf{R}_k^{-1}}^2 + \frac{1}{2} &\| \mathbf{x}_{k+1} - \mathbf{A}_k \mathbf{x}_k \|_{\mathbf{Q}_k^{-1}}^2
    \\
    + \frac{1}{2} &\| \mathbf{x}_{k} - \mathbf{m}^-_{k} \|_{[\mathbf{P}_{k}^-]^{-1}}^2
\end{split}
\end{align}
At the final time step $k=T$, the optimization problem becomes:
\begin{align}
    \min_{\mathbf{x}_T} \frac{1}{2} &\| \mathbf{y}_T - \mathbf{H}_T \mathbf{x}_T \|_{\mathbf{R}_T^{-1}}^2 + \frac{1}{2}\| \mathbf{x}_T - \mathbf{m}^-_{T} \|_{[\mathbf{P}_{T}^-]^{-1}}^2
\end{align}
This uniquely determines the final filtered mean and covariance, which coincide with the final smoothing mean and covariance.

As shown in \Cref{app:dynamic_programming}, this forward elimination procedure recovers the Kalman filtering equations. Hence, it produces the predicted and filtered means and covariances at all time steps. In addition, it provides conditional relations expressing each optimal state $\mathbf{x}_k$ in terms of the next state $\mathbf{x}_{k+1}$, denoted as $\mathbf{m}_k^c(\mathbf{x}_{k+1})$, for $k=0\ldots T-1$. Finally, the minimal value of the loss function can also be determined and reads:
\begin{align}
    \min_{\mathbf{x}_{0:T}} L(\mathbf{x}_{0:T}) = C + \frac{1}{2} \sum_{k=1}^\top \| \mathbf{y}_k - \mathbf{H}_k \mathbf{m}_k^- \|_{\mathbf{S}_{k}^{-1}}^2
\end{align}

\subsection{Backward Substitution}
At the end of the forward dynamic-programming pass, the final state $\mathbf{x}_T$ has been determined and each preceding state $\mathbf{x}_k$ is expressed conditionally in terms of the next one $\mathbf{x}_{k+1}$ through the conditional mean $\mathbf{m}_k^c(\mathbf{x}_{k+1})$, for $k=0\ldots T-1$. The optimal state trajectory can hence be recovered by recursively substituting these conditional relations backwards from $k=T-1$ to $k=0$.

As shown in \Cref{app:dynamic_programming}, this backward substitution procedure recovers the RTS smoothing equations. Thus, RTS smoothing can be interpreted as solving the sparse MAP problem by forward elimination followed by backward substitution.

\subsection{Adjoint Equations}
An alternative but equivalent backward pass can be obtained by considering sensitivities of the optimal cost with respect to the intermediate quantities produced during the forward pass.
The forward pass yields both the minimal cost and the predicted and filtered distributions on which this cost depends. The derivatives of this optimal cost with respect to these intermediate quantities can hence be propagated backwards in time.

Rather than propagating the smoothed means and covariances directly, this approach propagates adjoint variables. These adjoints measure how changes in the intermediate predictive and filtering quantities affect the optimal value of the cost function. This viewpoint is closely related to adjoint methods, Lagrangian duality, and backpropagation. Since the cost function is quadratic, its gradients are linear and contain sufficient information to reconstruct the smoothed state trajectory through the optimality conditions, as demonstrated in \Cref{app:dynamic_programming}.

As further shown in \Cref{app:dynamic_programming}, this adjoint-based backward pass recovers the MBF smoothing equations, which will be detailed in \Cref{sec:modified_bryson_frazier_smoothing}. This interpretation provides a derivation that is independent of the RTS recursions and additionally establishes a direct connection between smoothing and gradient computation. This connection will be leveraged in \Cref{sec:hyperparameter_optimization} to optimize kernel hyperparameters.

\section{Modified Bryson-Frazier Smoother}
\label{sec:modified_bryson_frazier_smoothing}
This section details the MBF smoothing equations, summarizing their algorithmic form following the qualitative and quantitative derivations in \Cref{sec:dynamic_programming} and \Cref{app:dynamic_programming}, respectively.

\subsection{Modified Bryson-Frazier Smoothing}
The MBF smoother reformulates the backward smoothing pass in terms of adjoint variables rather than propagating the smoothed means and covariances directly. Specifically, it propagates two sets of variables, $(\bm{\lambda}^+_k, \mathbf{\Lambda}^+_k)$ and $(\bm{\lambda}^-_k, \mathbf{\Lambda}^-_k)$, in a backward pass similar to the RTS smoother. The former are associated with the filtered state at time $k$, while the latter are associated with the predicted state at time $k$. Starting from $\bm{\lambda}_T^+ = \mathbf{0}$ and $\mathbf{\Lambda}_T^+ = \mathbf{0}$, the backward pass proceeds for $k=T\ldots1$. First, an update step computes
\begin{align}
    \bm{\lambda}_k^- &= (\mathbf{I} - \mathbf{K}_k \mathbf{H}_k)^\top \bm{\lambda}_k^+ - \mathbf{H}_k^\top \mathbf{S}_k^{-1} \mathbf{z}_k \\
    \mathbf{\Lambda}_k^- &= (\mathbf{I} - \mathbf{K}_k \mathbf{H}_k)^\top \mathbf{\Lambda}_k^+ (\mathbf{I} - \mathbf{K}_k \mathbf{H}_k) + 
    \mathbf{H}_k^\top \mathbf{S}_k^{-1} \mathbf{H}_k
\end{align}
which is followed by a prediction step:
\begin{align}
    \bm{\lambda}_{k-1}^+ &= \mathbf{A}_{k-1}^\top \bm{\lambda}_{k}^- \\
    \mathbf{\Lambda}_{k-1}^+ &= \mathbf{A}_{k-1}^\top \mathbf{\Lambda}_k^- \mathbf{A}_{k-1}
\end{align}
The smoothed variables can then be recovered, either from $(\bm{\lambda}^+_k, \mathbf{\Lambda}^+_k)$, using,
\begin{align}
    \mathbf{m}_{k}^s &= \mathbf{m}_{k} - \mathbf{P}_{k} \bm{\lambda}_k^+ \\
    \mathbf{P}_{k}^s &= \mathbf{P}_{k} - \mathbf{P}_{k} \mathbf{\Lambda}_k^+ \mathbf{P}_{k}
\end{align}
or from  $(\bm{\lambda}^-_k, \mathbf{\Lambda}^-_k)$, using,
\begin{align}
    \mathbf{m}_{k}^s &= \mathbf{m}_{k}^- - \mathbf{P}_{k}^- \bm{\lambda}_k^- \\
    \mathbf{P}_{k}^s &= \mathbf{P}_{k}^- - \mathbf{P}_{k}^- \mathbf{\Lambda}_k^- \mathbf{P}_{k}^-
\end{align}
Both retrieval formulas result in the same smoothed means and covariances as the RTS smoother whenever the latter is well-defined. Indeed, \citet{Bierman1973} originally derived the MBF algorithm as a reformulation of the RTS smoother in continuous time. For completeness, a self-contained derivation of the MBF smoother from the RTS recursions in discrete-time is provided in \Cref{sec:app:MBF-from-RTS}.

\subsection{Numerical Considerations}
From a numerical perspective, the MBF smoother has several favorable properties in terms of numerical stability.
First, the adjoint variables propagated by the MBF smoother do not involve differences of covariance matrices, which eliminates the risk of losing positive definiteness through the accumulation of round-off errors. Although covariance differences reappear in the reconstruction of the smoothed covariances, they are not themselves propagated through the backward recursion, preventing the accumulation of such errors. Moreover, these remaining subtractions can be eliminated using Householder transformations and square-root formulations \citep{square_root_MBF}.
Second, the MBF smoother removes the additional non-singularity requirement imposed by the RTS smoother on the predicted state covariance matrix. Indeed, the smoothing pass only requires the same matrix inversions as the Kalman filter, namely those involving the innovation covariance matrices $\mathbf{S}_k$. Finally, the MBF smoother also reduces computational and memory requirements as compared to the RTS smoother, as quantified in \Cref{app:flop_count}, which provides a detailed flop-count and memory analysis of the algorithms discussed in this paper.

\section{Hyperparameter Optimization}
\label{sec:hyperparameter_optimization}
Kernel functions in Gaussian processes typically depend on unknown parameters, often referred to as hyperparameters. A common approach to estimate these is to maximize the marginal likelihood, also known as the evidence \citep{Rasmussen_GP}. In standard GP regression, evaluating and differentiating the marginal likelihood scales cubically in the number of data points. In the state-space formulation, however, both quantities can be computed with linear scaling in the number of time points \citep{Sarkka_BFS, Parellier_Speeding_Up_Backprop}. This section shows that the MBF smoother naturally provides part of the quantities required for gradient-based hyperparameter optimization. The remaining quantities can be determined through an auxiliary backward recursion.

Consider the negative log marginal likelihood (NLL):
\begin{align}
    J_{1:T}^{\text{NLL}}
    &= - \log p(\mathbf{y}_{1:T} \mid \theta)
\end{align}
where $\theta$ denotes a hyperparameter. Due to the Markovian structure, it can be expressed, up to a constant, using the prediction error decomposition as \citep{schweppe1965evaluation}: 
\begin{align}\label{eq:JNLL_cost}
   J_{1:T}^{\text{NLL}} &= \frac{1}{2}\sum_{k=1}^T \left(\| \mathbf{y}_k - \mathbf{H}_k \mathbf{m}_{k}^- \|_{\mathbf{S}_k^{-1}}^2 + \log \det(\mathbf{S}_k) \right)
\end{align}
We denote the data-fit term by:
\begin{align}
    J_{1:T} = \frac{1}{2} \sum_{k=1}^T \| \mathbf{y}_k - \mathbf{H}_k \mathbf{m}_{k}^- \|_{\mathbf{S}_k^{-1}}^2
\end{align}
For regression, the innovation covariance $\mathbf{S}_k$ does not depend on the predicted or filtered means $\mathbf{m}_k^-$ and $\mathbf{m}_k$. As a result, the derivatives of the NLL with respect to these means coincide with those of $J_{1:T}$:
\begin{align}
    &\frac{\partial J^{\text{NLL}}_{1:T}}{\partial \mathbf{m}_{k}^-} = \frac{\partial J^{\text{NLL}}_{k:T}}{\partial \mathbf{m}_{k}^-}\ \, = \ \ \frac{\partial J_{k:T}}{\partial \mathbf{m}_{k}^-} \ = \frac{\partial J_{1:T}}{\partial \mathbf{m}_{k}^-} \\
    &\frac{\partial J^{\text{NLL}}_{1:T}}{\partial \mathbf{m}_{k}} = \frac{\partial J^{\text{NLL}}_{k+1:T}}{\partial \mathbf{m}_{k}} = \frac{\partial J_{k+1:T}}{\partial \mathbf{m}_{k}} = \frac{\partial J_{1:T}}{\partial \mathbf{m}_{k}}
\end{align}

\subsection{Adjoint View on MBF Smoothing}
As shown in \Cref{app:dynamic_programming}, the MBF smoother can be derived as an adjoint method applied to the MAP objective. Through the equivalence above, it can hence be shown that, for regression, these adjoint quantities coincide with the derivatives of the NLL with respect to the predicted and filtered means:
\begin{align}
    \bm{\lambda}_k^- &= \frac{\partial J_{k:T}^{\text{NLL}}}{\partial \mathbf{m}_{k}^-}&&\bm{\lambda}_k^+ = \frac{\partial J_{k+1:T}^\text{NLL}}{\partial \mathbf{m}_{k}} \\
    \mathbf{\Lambda}_k^- &= \frac{\partial^2 J_{k:T}^{\text{NLL}}}{\partial (\mathbf{m}_{k}^-)^2} &&\mathbf{\Lambda}_k^+ = \frac{\partial^2 J_{k+1:T}^\text{NLL}}{\partial (\mathbf{m}_{k})^2}
\end{align}
Thus, this establishes the result that the MBF smoother computes the first- and second-order derivatives of the NLL with respect to the intermediate mean quantities produced by the Kalman filter.

\subsection{Gradients with Respect to Covariances}
To obtain gradients of the NLL with respect to kernel hyperparameters, derivatives with respect to the predicted and filtered covariance matrices are also required. These derivatives are not directly provided by the MBF smoothing equations, but can be computed through an auxiliary backward recurrence. Define:
\begin{align}\label{eq:def_T}
    \mathbf{T}_k^- &= \frac{\partial J_{1:T}^{\text{NLL}}}{\partial \mathbf{P}_{k}^-} \\
    \mathbf{T}_k^+ &= \frac{\partial J_{1:T}^{\text{NLL}}}{\partial \mathbf{P}_{k}} 
\end{align}
and correction factors, ($\widetilde{\mathbf{\Lambda}}_k^-, \widetilde{\mathbf{\Lambda}}_k^+$), such that the covariance sensitivities admit the following decomposition:
\begin{align}
    \mathbf{T}_k^- &= \mathbf{\Lambda}_k^- + \widetilde{\mathbf{\Lambda}}_k^- \\ \label{eq:decomposition}
    \mathbf{T}_k^+ &= \mathbf{\Lambda}_k^+ + \widetilde{\mathbf{\Lambda}}_k^+
\end{align}
As detailed in \Cref{app:correction_equations}, the correction terms can be computed through an auxiliary backward recurrence. Starting from  $\widetilde{\mathbf{\Lambda}}_T^+ = \boldsymbol{0}$, for $k=T, \ldots, 1$, it reads:
\begin{align}\label{eq:corr_rec_1}
    \mathbf{l}_k &={} \frac{1}{2} \bm{\lambda}_k^+ \mathbf{z}_k^\top \mathbf{R}_k^{-1} \mathbf{H}_k \\\label{eq:corr_rec_2}
    \begin{split}
    \widetilde{\mathbf{\Lambda}}_k^- 
    &={} (\mathbf{I} - \mathbf{K}_k \mathbf{H}_k)^\top \widetilde{\mathbf{\Lambda}}_k^+ (\mathbf{I} - \mathbf{K}_k \mathbf{H}_k) \\
    &+(\mathbf{I} - \mathbf{K}_k \mathbf{H}_k)^\top (\mathbf{l}_k + \mathbf{l}_k^\top) (\mathbf{I} - \mathbf{K}_k \mathbf{H}_k)\\
    &{}-\frac{1}{2} \mathbf{H}_k^\top \mathbf{S}_k^{-1} \mathbf{H}_k - \frac{1}{2}\mathbf{H}_k^\top \mathbf{S}_k^{-1} \mathbf{z}_k \mathbf{z}_k^\top \mathbf{S}_k^{-1} \mathbf{H}_k
    \end{split}
    \\\label{eq:corr_rec_3}
    \widetilde{\mathbf{\Lambda}}_{k-1}^+ &={} \mathbf{A}_{k-1}^\top \widetilde{\mathbf{\Lambda}}_k^- \mathbf{A}_{k-1}
\end{align}
Together with the MBF variables, these correction terms provide the full derivatives of the NLL with respect to the predicted and filtered covariance matrices.

\subsection{Kernel Hyperparameter Optimization}
Having access to derivatives with respect to both means and covariances enables efficient computation of gradients of the NLL with respect to the hyperparameter $\theta$. Assuming the dependence on $\theta$ enters through the state-space matrices $\mathbf{A}_k$ and $\mathbf{Q}_k$, and the initial covariance $\mathbf{P}_0$, the chain rule gives,
\begin{align}\label{eq:gradient_expression_conceptual}
\begin{split}
    \frac{\text{d}J_{1:T}^{\text{NLL}}}{\text{d}\theta} = \sum_{k=0}^{T-1} & \left(\left\langle \frac{\partial J_{1:T}^{\text{NLL}}}{\partial \mathbf{A}_k}, \frac{\text{d} \mathbf{A}_k}{\text{d}\theta} \right\rangle + \left\langle \frac{\partial J_{1:T}^{\text{NLL}}}{\partial \mathbf{Q}_{k}}, \frac{\text{d} \mathbf{Q}_{k}}{\text{d}\theta} \right\rangle \right) \\
    &+ \left\langle \frac{\partial J_{1:T}^{\text{NLL}}}{\partial \mathbf{P}_0}, \frac{\text{d}\mathbf{P}_0}{\text{d}\theta} \right\rangle
\end{split}
\end{align} 
where $\langle \cdot, \cdot \rangle$ denotes the Frobenius inner product. Each term can be expanded further; the corresponding derivations are deferred to \cref{app:gradient_expr}. This leads to the following expression for the gradient of the NLL with respect to a hyperparameter $\theta$:
\begin{align}\label{eq:gradient_expr}
\begin{split}
    \frac{\text{d}J_{1:T}^{\text{NLL}}}{\text{d}\theta} &= \sum_{k=0}^{T-1} \left\langle ( \bm{\lambda}_{k+1}^- \mathbf{m}_{k}^\top + 2\mathbf{T}_{k+1}^- \mathbf{A}_k \mathbf{P}_{k}), \frac{\text{d}\mathbf{A}_k}{\text{d}\theta} \right\rangle \\
    &+ \sum_{k=0}^{T-1} \left \langle \mathbf{T}_{k+1}^-, \frac{\text{d}\mathbf{Q}_k}{\text{d}\theta} \right \rangle + \left \langle \mathbf{T}_0^+,\frac{\text{d}\mathbf{P}_0}{\text{d}\theta} \right\rangle 
\end{split}
\end{align}
This expression enables gradient-based hyperparameter optimization using quantities obtained from the Kalman filter, the MBF smoother, and the additional covariance-correction recursion.

\section{Numerical Experiments}
\label{sec:numerical_experiments}
We demonstrate the MBF smoother through three numerical experiments that highlight its correctness, practical usefulness, and robustness. First, we show that GP regression can be performed using Kalman filtering and MBF smoothing, yielding results identical to conventional RTS smoothing in a well-conditioned setting. Second, we illustrate how the intermediate quantities propagated by the MBF smoother enable efficient kernel hyperparameter optimization. Finally, we show that MBF smoothing remains applicable in singular state-space models where RTS smoothing breaks down. The code for the numerical experiments is available online.\footnote{\scriptsize \href{https://github.com/TomColemont/ProbNum26-MBF-Smoothing}{github.com/TomColemont/ProbNum26-MBF-Smoothing}}

\subsection{Experimental Set-Up}
We generate a latent function by drawing from a Gaussian process with a Matérn kernel with smoothness parameter $\nu = 3/2$,
\begin{align}
    k(t, t') = \sigma_f^2 (1 + \lambda |t-t'|) \exp(-\lambda|t-t'|)
\end{align}
where $\sigma_f=1$ is the output scale and $\lambda = 1$, corresponding to the length scale $\ell = \sqrt{3}/\lambda = \sqrt{3}$ \citep{Sarkka_ASDE}. In total, $5\,000$ data points are generated over a time-interval of $200$ seconds, of which $200$ data points are uniformly selected as training data. 

Following \citet{Sarkka_SSM, Matern_DiscreteTimeMatrices}, this GP can be represented in discrete-time as a linear Gaussian state-space model:
\begin{align}
    \Delta t_k &:= t_{k+1} - t_k \\
    \phi_k &:= \exp(-\lambda \Delta t_k) \\
    \mathbf{A}_k &= \phi_k\begin{bmatrix}
        1 + \lambda \Delta t_k & \Delta t_k \\
        - \lambda^2 \Delta t_k & 1 - \lambda \Delta t_k
    \end{bmatrix}\\
    [\mathbf{Q}_k]_{11} &= \sigma_f^2 (1 - \phi_k^2 (1 + 2 \lambda \Delta t_k + 2 (\lambda \Delta t_k)^2)) \\
    [\mathbf{Q}_k]_{12} &= [\mathbf{Q}_k]_{21} = 2 \sigma_f^2 \Delta t_k^2 \lambda^3 \phi_k^2 \\
    [\mathbf{Q}_k]_{22} &= \sigma_f^2 \lambda^2 (1 - \phi_k^2(1 - 2 \lambda \Delta t_k + 2(\lambda \Delta t_k)^2))\\
    \mathbf{H}_k &= \begin{bmatrix}
        1 & 0
    \end{bmatrix} \qquad \mathbf{R}_k = \sigma_n^2 = 10^{-2}
\end{align}
The initial distribution is given by
\begin{align}
    \mathbf{x}_0 \sim \mathcal{N}\left( \mathbf{0}, \ \sigma_f^2 \begin{bmatrix}
        1 & 0 \\
        0 & \lambda^2
    \end{bmatrix} \right)
\end{align}
where $\mathbf{P}_0$ is the stationary prior covariance. We use this set-up to study the MBF smoother in three settings.

\begin{figure}
    \centering
    \includegraphics[width=\linewidth]{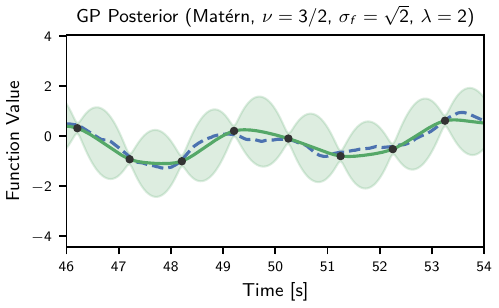}
    \caption{Segment of the latent function (dashed blue), observed at uniformly sampled time instances (black). The GP posterior with mean (solid green) and $2\sigma$-interval (light green) are shown for a Matérn-3/2 kernel with misspecified hyperparameters, resulting in overly large posterior uncertainty.}
    \label{fig:experiment1}
\end{figure}

\subsection{Gaussian Process Regression}\label{subsec:GPR}
\cref{fig:experiment1} shows the GP posterior obtained using a Matérn-3/2 kernel with deliberately misspecified hyperparameters, namely $\sigma_f=\sqrt{2}$ and $\lambda=2$. These differ from the values used to generate the latent function and therefore lead to an intentionally misspecified posterior. In this well-conditioned setting, all state-space covariance matrices remain regular, so both the RTS and MBF smoothers are applicable. As expected, the MBF-smoothed and RTS-smoothed means coincide up to machine precision and match the GP posterior computed through a Cholesky factorization of the full covariance matrix \citep{Rasmussen_GP}. This experiment verifies that the MBF smoother is an exact drop-in replacement for RTS smoothing when the state-space model is non-singular.

\subsection{Hyperparameter Optimization}\label{subsec:hyperparameter_optimization}
As described in \Cref{sec:hyperparameter_optimization}, the quantities propagated by the MBF smoother can be reused to compute the gradients of the negative log marginal likelihood, or equivalently negative log evidence, with respect to kernel hyperparameters. \Cref{fig:experiment2} illustrates the minimization trajectory in the two-dimensional hyperparameter space $(\sigma_f, \lambda)$ when minimizing the negative log evidence using gradient descent with a learning rate of $1.5\cdot10^{-3}$. Other first-order optimizers could be used as well. Lighter colors indicate lower values of the objective function. The resulting hyperparameter estimates are $\hat{\sigma}_f = 1.04$ and $\hat{\lambda} = 0.93$, which closely match the true hyperparameters of the generative model. This experiment demonstrates that the MBF smoother provides the quantities needed for gradient-based kernel hyperparameter optimization. 

\begin{figure}
    \centering
    \includegraphics[width=\linewidth]{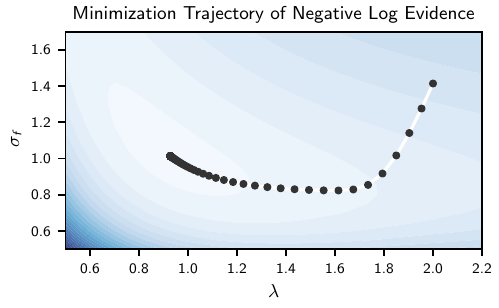}
    \caption{Minimization trajectory of the negative log evidence as a function of the kernel hyperparameters $\sigma_f$ and $\lambda$. The gradients are computed directly from MBF intermediate quantities. Lighter colors indicate lower values of the cost function. Gradient descent converges to $\hat{\sigma}_f = 1.04$ and $\hat{\lambda} = 0.93$, close to the true hyperparameters.}
    \label{fig:experiment2}
\end{figure}

\begin{figure}
    \centering
    \includegraphics[width=\linewidth]{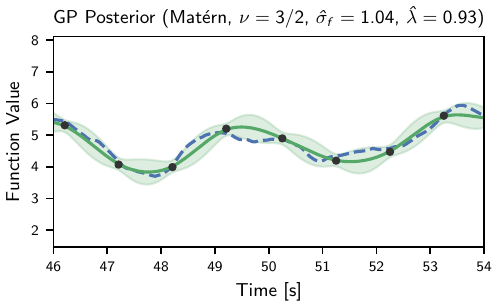}
    \caption{GP posterior with mean (solid green) and $2\sigma$ (light green) after augmenting the state-space model with a deterministic bias term, making both the state and process noise covariances singular. The MBF smoother remains stable and matches the corresponding GP posterior, while the RTS smoother fails due to the singular predicted covariance matrix.}
    \label{fig:experiment3}
\end{figure}

\subsection{Singular State-Space Models}\label{subsec:singular_sssm}
Finally, we consider a state-space model in which both the state and process noise covariance matrices become singular. Such situations arise naturally when augmenting the state vector with deterministic components, such as a constant bias term. We add a bias of $5$ units to the latent function, and include this as a deterministic bias in the state-space model. This introduces zeros in the final row and column of both $\mathbf{P}_k$ and $\mathbf{Q}_k$. \Cref{fig:experiment3} shows the resulting GP posterior when the estimated hyperparameters from \Cref{subsec:hyperparameter_optimization} are used.

In this singular setting, the RTS smoother is no longer applicable because it requires inversion of the predicted state covariance matrix. In contrast, the MBF smoother remains well-defined and produces a smoothed trajectory that agrees with the GP posterior up to machine precision. This illustrates the MBF smoother's robustness in regimes where the RTS smoother breaks down.

\section{Discussion}
\label{sec:discussion_and_conclusion}
The results of this work demonstrate that the modified Bryson-Frazier (MBF) smoother provides a principled alternative to the Rauch-Tung-Striebel (RTS) smoother for Gaussian process regression in its state-space representation. In well-conditioned settings, MBF smoothing is fully equivalent to RTS smoothing and yields the same posterior distributions as standard Gaussian process regression. Its main advantage appears in numerically challenging regimes: by avoiding inversion of the predicted state covariance matrix, MBF smoothing remains applicable when this matrix becomes ill-conditioned or singular. This is particularly relevant in high-accuracy regimes, where covariance matrices may become nearly singular, and in constrained or augmented state-space models, as illustrated in \Cref{subsec:singular_sssm}.

Beyond numerical robustness, MBF smoothing also offers computational advantages. Although both RTS and MBF smoothing are cubic in the state dimension, the latter has lower leading-order computational cost and reduced memory requirements, as quantified in \Cref{app:flop_count}. Moreover, unlike the RTS smoother, it does not propagate covariance differences backwards, avoiding round-off error accumulation that can lead to loss of positive definiteness. The remaining covariance subtractions during the reconstruction can be avoided using square-root implementations \citep{square_root_MBF}.

A useful consequence of the MBF smoothing formulation is that smoothed means and covariances need not be reconstructed at every step of the backward pass. Instead, the algorithm propagates adjoint variables, from which smoothed quantities can be recovered only when needed. This is especially relevant in probabilistic ODE solvers, where small internal time steps may be required in stiff regions even when the solution is only requested at a coarser set of output times.

The adjoint interpretation also connects MBF smoothing to kernel hyperparameter optimization. The MBF backward pass naturally provides derivatives of the negative log marginal likelihood with respect to the predicted and filtered means. Together with an additional covariance-correction recursion, this yields the sensitivities required for hyperparameter optimization, allowing smoothing and hyperparameter learning to be treated within a single state-space framework.

This work has focused on one-dimensional Gaussian process regression with stationary, integrable kernels, for which exact or arbitrarily accurate state-space representations are available. Natural extensions include spatio-temporal Gaussian processes, where larger state dimensions may make the computational advantages of MBF smoothing more pronounced \citep{Sarkka_spatiotemporal}, and probabilistic ODE solvers. While MBF smoothing applies directly to the latter class of solvers, MBF-based hyperparameter optimization requires further adaptation for EK1-based solvers, where the innovation covariance depends on the predicted mean. Combining this with iterated extended smoothing schemes is especially interesting, as these methods naturally involve multiple forward-backward passes necessary for gradient-based hyperparameter optimization.

\vspace{-5pt}
\section{Conclusion}
\vspace{-5pt}
\label{sec:conclusion}
We have revisited the modified Bryson-Frazier smoother as a numerically robust and computationally efficient alternative to RTS smoothing for Gaussian process regression in its state-space representation. The MBF smoother computes the same posterior distributions as the RTS smoother whenever both are well-defined, while avoiding the inversion of predicted state covariance matrices that can make RTS smoothing unstable or infeasible. Furthermore, the MBF backward variables admit an adjoint interpretation, allowing them to be reused for gradient-based kernel hyperparameter optimization with minimal additional cost.

Overall, the MBF smoother provides a unified route to stable smoothing and hyperparameter learning in Gaussian processes in their state-space representation, making it a useful building block for probabilistic numerical methods based on Bayesian filtering and smoothing.

\vspace{-5pt}
\section*{Acknowledgments}
\vspace{-5pt}
This research was partially supported by the Research Foundation -- Flanders (FWO; Grant No.\ I000725N, I002123N, and 1178926N).

\vspace{-5pt}
\bibliography{references}

\appendix

\section{MBF-RTS Smoother Equivalence}\label{sec:app:MBF-from-RTS}
\citet{Bierman1973} originally derived the MBF smoother from the RTS smoother recursions by relying on the underlying continuous-time differential equations. In contrast, this section provides a self-contained derivation of the discrete-time MBF smoother. We start by defining the MBF variables explicitly in terms of the predicted, filtered and smoothed quantities:
\begin{align}
    \bm{\lambda}_k^+ &= \mathbf{P}_{k}^{-1} (\mathbf{m}_{k}-\mathbf{m}_{k}^s)\\
    \bm{\lambda}_k^- &= [\mathbf{P}_{k}^-]^{-1} (\mathbf{m}_{k}^--\mathbf{m}_{k}^s)\\
    \mathbf{\Lambda}_k^+ &= \mathbf{P}_{k}^{-1} (\mathbf{P}_{k} - \mathbf{P}_{k}^s) \mathbf{P}_{k}^{-1}\\
    \mathbf{\Lambda}_k^- &= [\mathbf{P}_{k}^-]^{-1} (\mathbf{P}_{k}^- - \mathbf{P}_{k}^s) [\mathbf{P}_{k}^-]^{-1}
\end{align}
These definitions assume that the predicted and filtered state covariance matrices are non-singular. An alternative derivation that derives the MBF smoother without this assumption is provided by \cite{square_root_MBF}.

\subsection{Derivation: Mean Recursions}
After rearranging, the RTS smoothing equation for the mean can be expressed as:
\begin{align}
    \mathbf{m}_{k}^s - \mathbf{m}_{k} &=  \mathbf{J}_k (\mathbf{m}_{k+1}^s - \mathbf{m}_{k+1}^-) \\
    &= \mathbf{P}_{k} \mathbf{A}_k^\top [\mathbf{P}_{k+1}^-]^{-1} (\mathbf{m}_{k+1}^s - \mathbf{m}_{k+1}^-)
\end{align}
Left-multiplication by $\mathbf{P}_{k}^{-1}$ yields:
\begin{align}
    \mathbf{P}_{k}^{-1} (\mathbf{m}_{k}^s - \mathbf{m}_{k}) &= \mathbf{A}_k^\top [\mathbf{P}_{k+1}^-]^{-1} (\mathbf{m}_{k+1}^s - \mathbf{m}_{k+1}^-)\\
    \bm{\lambda}_k^+ &= \mathbf{A}_k^\top \bm{\lambda}_{k+1}^-
\end{align}
which recovers the MBF mean-prediction equation. To derive the update equation, we relate $\bm{\lambda}_k^+$ and $\bm{\lambda}_k^-$ as:
\begin{align}
    \mathbf{P}_{k} \bm{\lambda}_k^+ - \mathbf{P}_{k}^- \bm{\lambda}_k^- &= \mathbf{m}_{k} - \mathbf{m}_{k}^- \\
    &= \mathbf{K}_k \mathbf{z}_k \\
    &= \mathbf{P}_{k}^- \mathbf{H}_k^\top \mathbf{S}_k^{-1} \mathbf{z}_k
\end{align}
Left-multiplication by $[\mathbf{P}_k^-]^{-1}$ yields after rearranging:
\begin{align}
    \bm{\lambda}_k^- &= [\mathbf{P}_{k}^-]^{-1}\mathbf{P}_{k} \bm{\lambda}_k^+ -\mathbf{H}_k^\top \mathbf{S}_k^{-1} \mathbf{z}_k
\end{align}
Using the symmetry of covariance matrices, the first term can be re-expressed as:
\begin{align}
    [\mathbf{P}_{k}^-]^{-1} \mathbf{P}_{k} &= [\mathbf{P}_{k}^-]^{-1} \mathbf{P}_{k}^- (\mathbf{I} - \mathbf{K}_k \mathbf{H}_k)^\top \\ \label{eq:app_identity}
    &= (\mathbf{I} - \mathbf{K}_k \mathbf{H}_k)^\top
\end{align}
which recovers the MBF mean-update equation:
\begin{align}
    \bm{\lambda}_k^- &=  (\mathbf{I} - \mathbf{K}_k \mathbf{H}_k)^\top\bm{\lambda}_k^+ -\mathbf{H}_k^\top \mathbf{S}_k^{-1} \mathbf{z}_k
\end{align}

\subsection{Derivation: Covariance Recursions}
After rearranging, the RTS smoothing equation for the covariance can be expressed as:
\begin{align}
    \mathbf{P}_{k}^s - \mathbf{P}_{k} &= \mathbf{J}_k (\mathbf{P}_{k+1}^s - \mathbf{P}_{k+1}^-) \mathbf{J}_k^\top \\
    &= \mathbf{P}_{k} \mathbf{A}_k^\top [\mathbf{P}_{k+1}^-]^{-1} (\mathbf{P}_{k+1}^s - \mathbf{P}_{k+1}^-) [\mathbf{P}_{k+1}^-]^{-1} \mathbf{A}_k \mathbf{P}_{k}
\end{align}
Left- and right-multiplication by $\mathbf{P}_{k}^{-1}$ recovers the MBF covariance-prediction equation:
\begin{align}
    \mathbf{\Lambda}_k^+ &= \mathbf{A}_k^\top \mathbf{\Lambda}_{k+1}^- \mathbf{A}_k
\end{align}
For the update equation, we relate $\mathbf{\Lambda}_k^+$ and $\mathbf{\Lambda}_k^-$ as:
\begin{align}
    \mathbf{P}_{k} \mathbf{\Lambda}_k^+ \mathbf{P}_{k} - \mathbf{P}_{k}^- \mathbf{\Lambda}_k^- \mathbf{P}_{k}^- &= \mathbf{P}_{k} - \mathbf{P}_{k}^- \\
    &= - \mathbf{K}_k \mathbf{H}_k \mathbf{P}_{k}^-
\end{align}
Left- and right-multiplication by $[\mathbf{P}_{k}^-]^{-1}$ leads to:
\begin{align}
    [\mathbf{P}_{k}^-]^{-1} \mathbf{P}_{k} \mathbf{\Lambda}_k^+ \mathbf{P}_{k} [\mathbf{P}_{k}^-]^{-1} - \mathbf{\Lambda}_k^- &= - [\mathbf{P}_{k}^-]^{-1}\mathbf{K}_k \mathbf{H}_k \\
    &= - \mathbf{H}_k^\top \mathbf{S}_k^{-1} \mathbf{H}_k
\end{align}
Using the identity in \cref{eq:app_identity} recovers the MBF covariance-update equation.
\begin{align}
    \mathbf{\Lambda}_k^- = (\mathbf{I} - \mathbf{K}_k \mathbf{H}_k)^\top \mathbf{\Lambda}_k^+ (\mathbf{I} - \mathbf{K}_k \mathbf{H}_k) + \mathbf{H}_k^\top \mathbf{S}_k^{-1} \mathbf{H}_k
\end{align}

\section{Unified Derivation: Kalman Filter, RTS and MBF Smoother}
\label{app:dynamic_programming}
This Appendix subsequently derives the Kalman filter, RTS and MBF smoother through a unified dynamic programming framework.

\subsection{Kalman Filtering: Forward Dynamic Programming}
We aim to minimize the cost function in \cref{eq:L_cost_function} through forward dynamic programming. Starting from the initial costs:
\begin{align}
    V_0^\star(\mathbf{x}_0) &= \frac{1}{2} \| \mathbf{x}_0 - \mathbf{m}_0 \|_{\mathbf{P}_0^{-1}}^2 \\
    V_1^\star(\mathbf{x}_1) &= \min_{\mathbf{x}_0} \{V_0^\star(\mathbf{x}_0) + \frac{1}{2} \| \mathbf{x}_1 - \mathbf{A}_0 \mathbf{x}_0 \|_{\mathbf{Q}_0^{-1}}^2 \}
\end{align}
We can successively minimize the cost by considering the following problem for $k=1\ldots T-1$:
\begin{align}\label{eq:general_form_Vkp1}
    \begin{split}
        V_{k+1}^\star(\mathbf{x}_{k+1}) 
            ={}&
        \min_{\mathbf{x}_k} \{V_k^\star(\mathbf{x}_k) + \frac{1}{2} \| \mathbf{y}_k - \mathbf{H}_k \mathbf{x}_k \|_{\mathbf{R}_k^{-1}}^2 \\
            &{}+
        \frac{1}{2} \| \mathbf{x}_{k+1} - \mathbf{A}_k \mathbf{x}_k \|_{\mathbf{Q}_k^{-1}}^2 \}
    \end{split}
\end{align}
To end, for $k=T$, the recursion becomes:
\begin{align}
    V_{T+1}^\star &= \min_{\mathbf{x}_T} \{V_T^\star(\mathbf{x}_T) + \frac{1}{2} \| \mathbf{y}_T - \mathbf{H}_T \mathbf{x}_T \|_{\mathbf{R}_T^{-1}}^2 \}
\end{align}
At the end of the forward pass, we both have the exact minimizer $\mathbf{x}_T^\star$ as well as the minimal cost
$V_{T+1}^\star$.
We will show that this forward dynamic programming perspective recovers the well-known Kalman filtering equations. 
To this end, we will exhaustively rely upon the following well-known identity for the sum of two quadratic functions.
\begin{lemma}\label{lem:quad-sum:1}
Let $\mathbf{x}, \mathbf{\bar{x}} \in \mathbb{R}^n$, $\mathbf{b} \in \mathbb{R}^m$, and $\mathbf{\Gamma} \in \mathbb{R}^{m \times n}$.
Suppose that $\mathbf{T}_1 \in \mathbb{R}^{n \times n}$ and $\mathbf{T}_2 \in \mathbb{R}^{m \times m}$ are symmetric positive definite matrices and define the function
\begin{equation}
    q(\mathbf{x}) = \frac{1}{2} \|\mathbf{x} - \mathbf{\bar{x}}\|_{\mathbf{T}_1^{-1}}^2 + \frac{1}{2} \|\mathbf{\Gamma} \mathbf{x} - \mathbf{b}\|_{\mathbf{T}_2^{-1}}^2.
\end{equation}
Then, the following assertions hold.
\begin{enumeratlem}
    \item
    \(
        q(\mathbf{x}) = \frac{1}{2} \|\mathbf{x} - \mathbf{x}^\star\|_{\mathbf{M}^{-1}}^2 + \frac{1}{2} \|\mathbf{\Gamma} \mathbf{\bar{x}} - \mathbf{b}\|_{\mathbf{S}^{-1}}^2
    \),
    where
    \begin{align*}
        \mathbf{S}
            &=
        \mathbf{\Gamma} \mathbf{T}_1 \mathbf{\Gamma}^\top + \mathbf{T}_2, \quad
        \mathbf{L}
            =
        \mathbf{T}_1 \mathbf{\Gamma}^\top \mathbf{S}^{-1},\\
        \mathbf{M}
            &=
        (\mathbf{I} - \mathbf{L} \mathbf{\Gamma}) \mathbf{T}_1,\quad
        \mathbf{x}^\star 
            =
        \mathbf{\bar{x}} + \mathbf{L} (\mathbf{b} - \mathbf{\Gamma} \mathbf{\bar{x}}).
    \end{align*}
    \item
    \(
    \mathbf{x}^\star = \arg\min_{\mathbf{x} \in \mathbb{R}^n} q(\mathbf{x})
    \)
    and 
    \(
    q(\mathbf{x}^\star) = \frac{1}{2} \|\mathbf{\Gamma} \mathbf{\bar{x}} - \mathbf{b}\|_{\mathbf{S}^{-1}}^2.
    \)
\end{enumeratlem}
\end{lemma}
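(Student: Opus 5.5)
The plan is to prove the first assertion by matching quadratic, linear and constant terms in $\mathbf{x}$, and then read off the second assertion directly.

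\textbf{Invertibility and symmetry of $\mathbf{M}$.} Since $\mathbf{T}_1$ and $\mathbf{T}_2$ are symmetric positive definite, $\mathbf{S}$ is symmetric positive definite and hence invertible. The key identity is the Woodbury (matrix inversion) lemma,
\begin{equation*}
    \mathbf{M} = \mathbf{T}_1 - \mathbf{T}_1 \mathbf{\Gamma}^\top \mathbf{S}^{-1} \mathbf{\Gamma} \mathbf{T}_1 = (\mathbf{T}_1^{-1} + \mathbf{\Gamma}^\top \mathbf{T}_2^{-1} \mathbf{\Gamma})^{-1}.
\end{equation*}
I would verify it directly by multiplying $\mathbf{M}$ by $\mathbf{T}_1^{-1} + \mathbf{\Gamma}^\top \mathbf{T}_2^{-1}\mathbf{\Gamma}$ and using $\mathbf{\Gamma}\mathbf{T}_1\mathbf{\Gamma}^\top = \mathbf{S} - \mathbf{T}_2$. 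It follows that $\mathbf{M}$ is symmetric positive definite, so $\|\cdot\|_{\mathbf{M}^{-1}}$ is well defined. We also obtain $\mathbf{M}^{-1} = \mathbf{T}_1^{-1} + \mathbf{\Gamma}^\top \mathbf{T}_2^{-1}\mathbf{\Gamma}$, which is exactly the Hessian of $q$.

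\textbf{Matching the quadratic and linear terms.} Both sides of the first assertion are quadratics in $\mathbf{x}$ with Hessian $\mathbf{M}^{-1}$. It therefore suffices to show that the gradient of $q$ vanishes at $\mathbf{x}^\star$, that is,
\begin{equation*}
    \mathbf{M}^{-1}\mathbf{x}^\star = \mathbf{T}_1^{-1}\bar{\mathbf{x}} + \mathbf{\Gamma}^\top\mathbf{T}_2^{-1}\mathbf{b}.
\end{equation*}
Writing $\mathbf{x}^\star = \bar{\mathbf{x}} + \mathbf{L}\mathbf{r}$ with $\mathbf{r} = \mathbf{b} - \mathbf{\Gamma}\bar{\mathbf{x}}$, this reduces to the identity $\mathbf{M}^{-1}\mathbf{L} = \mathbf{\Gamma}^\top \mathbf{T}_2^{-1}$. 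This is checked by computing
\begin{equation*}
    (\mathbf{T}_1^{-1} + \mathbf{\Gamma}^\top\mathbf{T}_2^{-1}\mathbf{\Gamma})\mathbf{T}_1\mathbf{\Gamma}^\top\mathbf{S}^{-1} = \mathbf{\Gamma}^\top\mathbf{T}_2^{-1}(\mathbf{T}_2 + \mathbf{\Gamma}\mathbf{T}_1\mathbf{\Gamma}^\top)\mathbf{S}^{-1} = \mathbf{\Gamma}^\top\mathbf{T}_2^{-1}.
\end{equation*}

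\textbf{Matching the constant term.} This is the main bookkeeping step. Evaluate $q(\mathbf{x}^\star)$ directly. We have $\mathbf{x}^\star - \bar{\mathbf{x}} = \mathbf{T}_1\mathbf{\Gamma}^\top\mathbf{S}^{-1}\mathbf{r}$, and
\begin{equation*}
    \mathbf{\Gamma}\mathbf{x}^\star - \mathbf{b} = -(\mathbf{I} - \mathbf{\Gamma}\mathbf{T}_1\mathbf{\Gamma}^\top\mathbf{S}^{-1})\mathbf{r} = -\mathbf{T}_2\mathbf{S}^{-1}\mathbf{r}.
\end{equation*}
Hence
\begin{equation*}
    2q(\mathbf{x}^\star) = \mathbf{r}^\top\mathbf{S}^{-1}\mathbf{\Gamma}\mathbf{T}_1\mathbf{\Gamma}^\top\mathbf{S}^{-1}\mathbf{r} + \mathbf{r}^\top\mathbf{S}^{-1}\mathbf{T}_2\mathbf{S}^{-1}\mathbf{r} = \mathbf{r}^\top\mathbf{S}^{-1}\mathbf{r}.
\end{equation*}
A quadratic is determined by its Hessian, its stationary point and its value there. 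Both sides of the first assertion share all three, so they coincide, which proves the first assertion.

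\textbf{Second assertion.} The first term of the first assertion is nonnegative and vanishes only at $\mathbf{x}=\mathbf{x}^\star$, because $\mathbf{M}^{-1}$ is positive definite. Hence $\mathbf{x}^\star$ is the unique minimizer of $q$ and the minimal value is $\frac{1}{2}\|\mathbf{\Gamma}\bar{\mathbf{x}}-\mathbf{b}\|^2_{\mathbf{S}^{-1}}$.

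The only nontrivial obstacle is establishing the Woodbury identity, which guarantees that $\mathbf{M}$ is symmetric positive definite; everything else is direct substitution.
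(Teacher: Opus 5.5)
Your proof is correct and follows the route the paper indicates: the paper's proof is a one-line appeal to the first-order optimality conditions and the Woodbury identity, with a pointer to a textbook example, and you use exactly those two ingredients. The Woodbury identity lets you identify $\mathbf{M}^{-1}$ with the Hessian $\mathbf{T}_1^{-1} + \mathbf{\Gamma}^\top \mathbf{T}_2^{-1}\mathbf{\Gamma}$, the optimality condition gives the stationary point $\mathbf{x}^\star$, and your explicit evaluation of $q(\mathbf{x}^\star)$ via $\mathbf{\Gamma}\mathbf{x}^\star - \mathbf{b} = -\mathbf{T}_2\mathbf{S}^{-1}(\mathbf{b}-\mathbf{\Gamma}\bar{\mathbf{x}})$ supplies the constant-term check that the paper leaves implicit.
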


\begin{proof}
    Follows directly from the optimality conditions and the Woodbury matrix identity, see e.g. \citep[Example 1.1]{Rawlings2017Model}.
\end{proof}

First, we will show for all $k = 1 \ldots T$ that:
\begin{align}\label{eq:induction_claim}
\begin{split}
    V_k^\star(\mathbf{x}_k)
        ={}&
    \frac{1}{2} \| \mathbf{x}_k - \mathbf{m}_{k}^- \|_{[\mathbf{P}_{k}^-]^{-1}}^2
    {}+ \underbrace{\frac{1}{2} \sum_{i=1}^{k-1} \| \mathbf{y}_i - \mathbf{H}_i \mathbf{m}_{i}^- \|_{\mathbf{S}_i^{-1}}^2}_{\equiv J_{1:k-1}}\\
        &{}+
    \underbrace{\min_{\mathbf{x}_{k-1}} \left\{ \frac12\|\mathbf{x}_{k-1} - \mathbf{m}_{k-1}^c(\mathbf{x}_k) \|_{[\mathbf{P}_{k-1}^c]^{-1}}^2 \right\}}_{=0}
\end{split}
\end{align}
up to constants, where:
\begin{align}
    \mathbf{m}_{k}^- &= \mathbf{A}_{k-1} \mathbf{m}_{k-1} \\
    \mathbf{P}_{k}^- &= \mathbf{A}_{k-1} \mathbf{P}_{k-1} \mathbf{A}_{k-1}^\top + \mathbf{Q}_{k-1} \\
    \mathbf{S}_k &= \mathbf{H}_k \mathbf{P}_{k}^- \mathbf{H}_k^\top + \mathbf{R}_k \\
    \mathbf{K}_k &= \mathbf{P}_{k}^- \mathbf{H}_k^\top \mathbf{S}_k^{-1} \\
    \mathbf{m}_{k} &= \mathbf{m}_{k}^- + \mathbf{K}_k (\mathbf{y}_k - \mathbf{H}_k \mathbf{m}_{k}^-) \\
    \mathbf{P}_{k} &= (\mathbf{I} - \mathbf{K}_k \mathbf{H}_k) \mathbf{P}_{k}^- \\
    \mathbf{J}_k &= \mathbf{P}_{k} \mathbf{A}_k^\top [\mathbf{P}_{k+1}^-]^{-1}\\
    \mathbf{m}_k^c(\mathbf{x}_{k+1}) &= \mathbf{m}_{k} + \mathbf{J}_k (\mathbf{x}_{k+1} - \mathbf{m}_{k+1}^-) \\
    \mathbf{P}_k^c &= (\mathbf{I} - \mathbf{J}_k \mathbf{A}_k)\mathbf{P}_k =
    \mathbf{P}_k -\mathbf{J}_k \mathbf{P}_{k+1}^- \mathbf{J}_k^\top
\end{align}
where $\mathbf{m}_{k}^-$, $\mathbf{m}_{k}$ and $\mathbf{m}_k^c$ denote the predicted, filtered and conditional smoothed means, respectively. Consider the case where $k=1$, for which:
\begin{align}
    V_1^\star(\mathbf{x}_1) = \min_{\mathbf{x}_0} \left\{ \frac{1}{2}\|\mathbf{x}_0 - \mathbf{m}_{0} \|_{\mathbf{P}_{0}^{-1}}^2 + \frac{1}{2} \| \mathbf{x}_1 - \mathbf{A}_0 \mathbf{x}_0 \|_{\mathbf{Q}_0^{-1}}^2 \right\}
\end{align}
Then, it follows directly from
\Cref{lem:quad-sum:1}
that:
\begin{align}
\begin{split}
    V_1^\star(\mathbf{x}_1)
        ={}&
    \frac{1}{2} \| \mathbf{x}_1 - \mathbf{m}_{1}^- \|_{[\mathbf{P}_{1}^-]^{-1}}^2\\
        &{}+
    \min_{\mathbf{x}_0} \left\{ \frac12\|\mathbf{x}_0 - \mathbf{m}_0^c(\mathbf{x}_1) \|_{[\mathbf{P}_0^c]^{-1}}^2 \right\}
\end{split}
\end{align}
which establishes the claim for $k=1$. Now, suppose the claim holds for some $k= 1, \ldots, T-1$. Then, it follows from \cref{eq:general_form_Vkp1} that:
\begin{align}
\begin{split}
    V_{k+1}^\star(\mathbf{x}_{k+1}) = J_{1:k-1} + \min_{\mathbf{x}_k} \{\frac{1}{2} \| \mathbf{x}_k - \mathbf{m}_{k}^- \|_{[\mathbf{P}_{k}^-]^{-1}}^2 \\
    + \frac{1}{2} \| \mathbf{y}_k - \mathbf{H}_k \mathbf{x}_k \|_{\mathbf{R}_k^{-1}}^2 + \frac{1}{2} \| \mathbf{x}_{k+1} - \mathbf{A}_k \mathbf{x}_k \|_{\mathbf{Q}_k^{-1}}^2 \}
\end{split}
\end{align}
Applying \Cref{lem:quad-sum:1} to the first two norms yields:
\begin{align}\label{eq:start_recovery}
\begin{split}
    &V_{k+1}^\star(\mathbf{x}_{k+1}) = \underbrace{J_{1:k-1} + \frac{1}{2} \| \mathbf{y}_k - \mathbf{H}_k \mathbf{m}_{k}^- \|_{\mathbf{S}_k^{-1}}^2}_{\equiv J_{1:k}} \\
    &+ \min_{\mathbf{x}_k} \{\frac{1}{2} \| \mathbf{x}_k - \mathbf{m}_{k} \|_{\mathbf{P}_{k}^{-1}}^2 + \frac{1}{2} \| \mathbf{x}_{k+1} - \mathbf{A}_k \mathbf{x}_k \|_{\mathbf{Q}_k^{-1}}^2 \}
\end{split}
\end{align}
Hence, applying \Cref{lem:quad-sum:1},
it follows that:
\begin{align}
\begin{split}
    V_{k+1}^\star(\mathbf{x}_{k+1})
        ={}&
    \frac{1}{2} \| \mathbf{x}_{k+1} - \mathbf{m}_{k+1}^- \|_{[\mathbf{P}_{k+1}^-]^{-1}}^2 + J_{1:k}\\
        &{}+
    \min_{\mathbf{x}_{k}} \left\{ \frac12\|\mathbf{x}_{k} - \mathbf{m}_{k}^c(\mathbf{x}_{k+1}) \|_{[\mathbf{P}_{k}^c]^{-1}}^2 \right\}
\end{split}
\end{align}
Consequently, by induction, we have established that \cref{eq:induction_claim} holds for $k=1\ldots T-1$. Consider the case where $k=T$, for which:
\begin{align}
\begin{split}
    V_{T+1}^\star &= J_{1:T-1} + \min_{\mathbf{x}_T} \{\frac{1}{2} \| \mathbf{x}_T - \mathbf{m}_{T}^- \|_{[\mathbf{P}_{T}^-]^{-1}}^2 \\
    &+ \frac{1}{2} \| \mathbf{y}_T - \mathbf{H}_T \mathbf{x}_T \|_{\mathbf{R}_T^{-1}}^2
\end{split}
\end{align}
Then, applying 
\Cref{lem:quad-sum:1}
leads to:
\begin{align}
    \begin{split}
    V_{T+1}^\star ={}& J_{1:T-1} + \frac{1}{2} \| \mathbf{y}_T - \mathbf{H}_T \mathbf{m}_{T}^- \|_{\mathbf{S}_T^{-1}}^2 \\
    &{}+ \min_{\mathbf{x}_T} \{ \frac{1}{2} \| \mathbf{x}_T - \mathbf{m}_{T} \|_{\mathbf{P}_{T}^{-1}}^2\} = J_{1:T}
    \end{split}
\end{align}
It is hence clear that the obtained minimal cost reads:
\begin{align}
    L_{\text{min}}
        \equiv
    \min_{\mathbf{x}_{0:T}} L(\mathbf{x}_{0:T})
        {}={}
    V_{T+1}^\star + C 
        {}={}
    J_{1:T} + C
\end{align}

\subsection{RTS Smoothing: Backsubstitution}
The forward dynamic programming derivation provides not only the minimal cost but also the first and second moments of the predicted, filtered states and conditional smoothed states.
This structure naturally induces a backward recursion that will recover the RTS smoother.
Indeed, it has been shown in a recursive manner that
\begin{align*}
\begin{split}
    L(\mathbf{x}_{0:T})
        ={}&
    C
        +
    J_{1:T}
        +
    \frac{1}{2} \| \mathbf{x}_T - \mathbf{m}_{T} \|_{\mathbf{P}_{T}^{-1}}^2\\
        &+ 
    \frac12\sum_{k=0}^{T-1} \|\mathbf{x}_{k} - \mathbf{m}_{k}^c(\mathbf{x}_{k+1}) \|_{[\mathbf{P}_{k}^c]^{-1}}^2
\end{split}
\end{align*}

This decomposition shows that the posterior consists of a terminal quadratic term and a series of conditional quadratic terms coupling $\mathbf{x}_k$ to $\mathbf{x}_{k+1}$. Interpreting $L$ as a negative log-posterior, this directly corresponds to 
\begin{align}
    p(\mathbf{x}_{T} \mid \mathbf{y}_{1:T}) &{}= \mathcal{N}(\mathbf{m}_{T}, \mathbf{P}_{T})\\
    p(\mathbf{x}_k \mid \mathbf{x}_{k+1}, \mathbf{y}_{1:T}) 
        &{}= \mathcal{N}(\mathbf{m}_k^c(\mathbf{x}_{k+1}), \mathbf{P}_k^c)
\end{align}

Starting from $p(\mathbf{x}_T \mid y_{1:T})$, the marginals follow via the backward recursion
\begin{align}
    p(\mathbf{x}_k \mid \mathbf{y}_{1:T}) 
        =
    \int p(\mathbf{x}_k \mid \mathbf{x}_{k+1}, \mathbf{y}_{1:T} ) 
        p(\mathbf{x}_{k+1} \mid \mathbf{y}_{1:T}) \,\mathrm{d}\mathbf{x}_{k+1},
\end{align}
which, using Gaussian identities, leads to
\begin{align*}
    p(\mathbf{x}_{k} \mid \mathbf{y}_{1:T}) 
        = \mathcal{N}(\mathbf{m}_{k}^s, \mathbf{P}_{k}^s), 
    \quad \forall k = 0 \hdots T-1,
\end{align*}
where
\begin{align}
    \mathbf{m}_{k}^s 
        &= \mathbf{m}_{k} + \mathbf{J}_k (\mathbf{m}_{k+1}^s - \mathbf{m}_{k+1}^-) \\
    \mathbf{P}_{k}^s 
        &= \mathbf{P}_k^c + \mathbf{J}_k \mathbf{P}_{k+1}^s \mathbf{J}_k^\top \\
        &= \mathbf{P}_{k} + \mathbf{J}_k (\mathbf{P}_{k+1}^s - \mathbf{P}_{k+1}^-) \mathbf{J}_k^\top
\end{align}
which recover the RTS smoothing equations.

\subsection{MBF Smoothing: Adjoint Equations}
As shown in \Cref{sec:app:MBF-from-RTS}, the MBF smoother can be derived directly from the RTS smoother. Here, we consider an alternative derivation that does not start from the RTS smoothing equations. Instead, we interpret the MBF variables as adjoints of the dynamic programming objective introduced earlier.

After the forward dynamic programming pass, the minimum value of the cost function can be determined as:
\begin{align}
    L_{\min} = \frac{1}{2} \sum_{i=1}^\top \| \mathbf{y}_i - \mathbf{H}_i \mathbf{m}_{i}^- \|_{\mathbf{S}_i^{-1}}^2 + C
\end{align}
Ignoring the constant $C$, define the corresponding cost-to-go as:
\begin{align}
    J_{k:T}
        \equiv
    \frac{1}{2} \sum_{i=k}^\top \| \mathbf{y}_i - \mathbf{H}_i \mathbf{m}_{i}^- \|_{\mathbf{S}_i^{-1}}^2
        = J_{1:T} - J_{1:k-1}
\end{align}
Define the adjoint variables as the first derivatives of this cost with respect to the predicted and filtered means:
\begin{align}
    \bm{\lambda}_k^- &= \frac{\partial J_{k:T}}{\partial \mathbf{m}_{k}^-} \\ \label{eq:lambda_plus_definition}
    \bm{\lambda}_k^+ &= \frac{\partial J_{k+1:T}}{\partial \mathbf{m}_{k}}
\end{align}
These definitions reflect that $\mathbf{m}_k^-$ affects the current and future terms, whereas $\mathbf{m}_k$ only affects future terms.

We first focus on the mean-prediction equation. Assuming $\bm{\lambda}_{k+1}^+$ and $\bm{\lambda}_{k+1}^-$ are known, then $\bm{\lambda}_k^+$ can be determined through the chain rule:
\begin{align}
    \underbrace{\frac{\partial J_{k+1:T}}{\partial \mathbf{m}_{k}}}_{=\bm{\lambda}_k^+}&=\underbrace{\left( \frac{\partial \mathbf{m}_{k+1}^-}{\partial \mathbf{m}_{k}} \right)^\top}_{=\mathbf{A}_k^\top}\underbrace{\frac{\partial J_{k+1:T}}{\partial \mathbf{m}_{k+1}^-}}_{=\bm{\lambda}_{k+1}^-}
\end{align}
which recovers the MBF mean-prediction equation. We now turn to the mean-update equation. To determine $\bm{\lambda}_k^-$, consider:
\begin{align}
    \frac{\partial J_{k+1:T}}{\partial \mathbf{m}_{k}^-}&=\underbrace{\left( \frac{\partial \mathbf{m}_{k}}{\partial \mathbf{m}_{k}^-} \right)^\top}_{=(\mathbf{I} - \mathbf{K}_k \mathbf{H}_k)^\top}\underbrace{\frac{\partial J_{k+1:T}}{\partial \mathbf{m}_{k}}}_{=\bm{\lambda}_k^+}
\end{align}
where the left-hand side can be expressed in terms of $\bm{\lambda}_k^-$ by rewriting the cost as:
\begin{align}\label{eq:cost_relation}
    J_{k+1:T}=J_{k:T} - \frac{1}{2} \| \mathbf{y}_k - \mathbf{H}_k \mathbf{m}_{k}^- \|_{\mathbf{S}_k^{-1}}^2
\end{align}
which after taking the first derivative results in:
\begin{align}
    \frac{\partial J_{k+1:T}}{\partial \mathbf{m}^-_{k}}=\underbrace{\frac{\partial J_{k:T}}{\partial \mathbf{m}^-_{k}}}_{=\bm{\lambda}_k^-}+\mathbf{H}_k^\top \mathbf{S}_k^{-1} \underbrace{(\mathbf{y}_k - \mathbf{H}_k \mathbf{m}_{k}^-)}_{=\mathbf{z}_k}
\end{align}
Combining both expressions recovers the MBF mean-update equation:
\begin{align}
    \bm{\lambda}_k^-&=(\mathbf{I} - \mathbf{K}_k \mathbf{H}_k)^\top \bm{\lambda}_k^+ - \mathbf{H}_k^\top \mathbf{S}_k^{-1} \mathbf{z}_k
\end{align}
Together with the MBF mean-prediction equation, the first-order adjoint equations are therefore:
\begin{align}
    \bm{\lambda}_k^+&=\mathbf{A}_k^\top \bm{\lambda}_{k+1}^- \\
    \bm{\lambda}_k^-&=(\mathbf{I} - \mathbf{K}_k \mathbf{H}_k)^\top \bm{\lambda}_k^+ - \mathbf{H}_k^\top \mathbf{S}_k^{-1} (\mathbf{y}_k - \mathbf{H}_k \mathbf{m}_{k}^-)
\end{align}
which, by definition of $J_{k:T}$, is initialized from:
\begin{align}
    \bm{\lambda}_T^+=0
\end{align}
The same argument applies to the second derivatives. Define the adjoint variables as:
\begin{align}
    \mathbf{\Lambda}_k^-&=\frac{\partial^2 J_{k:T}}{\partial (\mathbf{m}_{k}^-)^2} \\
    \mathbf{\Lambda}_k^+&=\frac{\partial^2 J_{k+1:T}}{\partial (\mathbf{m}_{k})^2}
\end{align}
We first focus on the covariance-prediction equation. Again, assuming $\mathbf{\Lambda}_{k+1}^+$ and $\mathbf{\Lambda}_{k+1}^-$ are known, then, $\mathbf{\Lambda}_k^+$ can be determined through the chain rule:
\begin{align}
    \underbrace{\frac{\partial^2 J_{k+1:T}}{\partial (\mathbf{m}_{k})^2}}_{=\mathbf{\Lambda}_k^+}=\underbrace{\left( \frac{\partial \mathbf{m}_{k+1}^-}{\partial \mathbf{m}_{k}} \right)^\top}_{=\mathbf{A}_k^\top}\underbrace{\frac{\partial^2 J_{k+1:T}}{\partial (\mathbf{m}_{k+1}^-)^2}}_{=\mathbf{\Lambda}_{k+1}^-}\underbrace{\left( \frac{\partial \mathbf{m}_{k+1}^-}{\partial \mathbf{m}_{k}} \right)}_{=\mathbf{A}_k}
\end{align}
which recovers the MBF covariance-prediction equation. Focusing on the covariance-update equation to determine $\mathbf{\Lambda}_k^-$, consider:
\begin{align}
    \frac{\partial^2 J_{k+1:T}}{\partial (\mathbf{m}_{k}^-)^2}=\underbrace{\left( \frac{\partial \mathbf{m}_{k}}{\partial \mathbf{m}_{k}^-} \right)^\top}_{=(\mathbf{I} - \mathbf{K}_k \mathbf{H}_k)^\top} \underbrace{\left( \frac{\partial^2 J_{k+1:T}}{\partial (\mathbf{m}_{k})^2} \right)}_{=\mathbf{\Lambda}_k^+}\underbrace{\left( \frac{\partial \mathbf{m}_{k}}{\partial \mathbf{m}_{k}^-} \right)}_{=(\mathbf{I} - \mathbf{K}_k \mathbf{H}_k)}
\end{align}
where the left-hand side can be expressed in terms of $\mathbf{\Lambda}_k^-$ in a similar manner as above using \cref{eq:cost_relation}:
\begin{align}
    \frac{\partial^2 J_{k+1:T}}{\partial (\mathbf{m}_{k}^-)^2}=\mathbf{\Lambda}_k^- - \mathbf{H}_k^\top \mathbf{S}_k^{-1} \mathbf{H}_k
\end{align}
Combining both expressions recovers the MBF covariance-update equation:
\begin{align}    
    \mathbf{\Lambda}_k^-&=(\mathbf{I} - \mathbf{K}_k \mathbf{H}_k)^\top \mathbf{\Lambda}_k^+ (\mathbf{I} - \mathbf{K}_k \mathbf{H}_k) + \mathbf{H}_k^\top \mathbf{S}_k^{-1} \mathbf{H}_k
\end{align}
Together with the MBF covariance-prediction equation, the second-order adjoint recursions are therefore:
\begin{align}
    \mathbf{\Lambda}_k^+&=\mathbf{A}_k^\top \mathbf{\Lambda}_{k+1}^- \mathbf{A}_k \\
    \mathbf{\Lambda}_k^-&=(\mathbf{I} - \mathbf{K}_k \mathbf{H}_k)^\top \mathbf{\Lambda}_k^+ (\mathbf{I} - \mathbf{K}_k \mathbf{H}_k) + \mathbf{H}_k^\top \mathbf{S}_k^{-1} \mathbf{H}_k
\end{align}
which, by definition of $J_{k:T}$, is initialized from:
\begin{align}
    \mathbf{\Lambda}_T^+ = 0
\end{align}

\subsection{MBF Smoothing: Recovery of MAP Trajectory}
It remains to show how the smoothed means and covariances can be recovered from the adjoint variables. After eliminating the states up to time $k$, the remaining cost can be written in the form of \cref{eq:start_recovery} as:
\begin{align}
    \widetilde{L}_{k:T}(\mathbf{x}_k) = J_{1:k} + \frac{1}{2} \|\mathbf{x}_{k} - \mathbf{m}_{k} \|_{\mathbf{P}_{k}^{-1}}^2 + f_{k+1:T}(\mathbf{x}_k)
\end{align}
where $f_{k+1:T}(\mathbf{x}_k)$ collects all future cost terms expressed as a function of $\mathbf{x}_k$. Define the cost function as:
\begin{align}
    L_{k:T}(\mathbf{x}_k) = \frac{1}{2} \| \mathbf{x}_k - \mathbf{m}_{k} \|_{\mathbf{P}_{k}^{-1}}^2 + f_{k+1:T}(\mathbf{x}_k)
\end{align}
Then it follows that:
\begin{align}
    J_{k+1:T} =  \min_{\mathbf{x}_k} L_{k:T}(\mathbf{x}_k)
\end{align}
of which the minimizer is precisely the smoothed mean:
\begin{align}
    \mathbf{m}_{k}^s = \argmin_{\mathbf{x}_k} L_{k:T}(\mathbf{x}_k)
\end{align}
Consider now the derivative of the optimal cost $J_{k+1:T}$ with respect to the filtered mean $\mathbf{m}_k$:
\begin{align}
    \frac{\partial J_{k+1:T}}{\partial \mathbf{m}_{k}}&=\frac{\partial L_{k:T}}{\partial \mathbf{m}_{k}} \bigg\rvert_{\mathbf{x}_k = \mathbf{m}_{k}^s} + \frac{\partial L_{k:T}}{\partial \mathbf{x}_k}  \frac{\partial \mathbf{x}_k}{\partial \mathbf{m}_{k}} \bigg\rvert_{\mathbf{x}_k = \mathbf{m}_{k}^s} \\
    &=\mathbf{P}_{k}^{-1} (\mathbf{m}_{k} - \mathbf{m}_{k}^s)
\end{align}
since the second term vanishes due to the first order optimality condition:
\begin{align}\label{eq:FOP}
    \frac{\partial L_{k:T}}{\partial \mathbf{x}_{k}} \bigg\rvert_{\mathbf{x}_k = \mathbf{m}_{k}^s} &= \mathbf{P}_{k}^{-1} (\mathbf{m}_{k}^s - \mathbf{m}_{k}) + \frac{\partial f_{k+1:T}}{\partial \mathbf{x}_k}\bigg\rvert_{\mathbf{x}_k = \mathbf{m}_{k}^s} = 0.
\end{align}
Using the definition of $\bm{\lambda}_k^+$ from \cref{eq:lambda_plus_definition} directly results in its relation to the smoothed mean $\mathbf{m}_k^s$:
\begin{align} \label{eq:lambda_expr} 
\bm{\lambda}_k^+ 
    = 
    \mathbf{P}_k^{-1}(\mathbf{m}_k-\mathbf{m}_k^s) 
\end{align}

To recover the relation to the smoothed covariance, derive \cref{eq:lambda_expr} with respect to the filtered mean $\mathbf{m}_k$:
\begin{align}
    \bm{\Lambda}_k^+ = \left( \mathbf{I} - \frac{\partial \mathbf{m}_{k}^s}{\partial \mathbf{m}_{k}} \right)^\top \mathbf{P}_{k}^{-1} 
\end{align}
In order to obtain an expression for the partial derivative of the smoothed mean with respect to the filtered mean, derive the first-order optimality condition in \cref{eq:FOP} with respect to the filtered mean $\mathbf{m}_k$:
\begin{align}
    \mathbf{P}_{k}^{-1} \left(\frac{\partial \mathbf{m}_{k}^s}{\partial \mathbf{m}_{k}} - \mathbf{I} \right) + \frac{\partial^2 f_{k+1:T}}{\partial \mathbf{x}_k^2}\bigg\rvert_{\mathbf{x}_k= \mathbf{m}_{k}^s} \frac{\partial \mathbf{m}_{k}^s}{\partial \mathbf{m}_{k}} &= 0 \\
    \left( \mathbf{P}_{k}^{-1} + \frac{\partial^2 f_{k+1:T}}{\partial \mathbf{x}_k^2}\bigg\rvert_{\mathbf{x}_k = \mathbf{m}_{k}^s} \right)\frac{\partial \mathbf{m}_{k}^s}{\partial \mathbf{m}_{k}} &=\mathbf{P}_{k}^{-1}
\end{align}
The matrix between parentheses is the Hessian of the posterior cost at the optimum and is therefore equal to $[\mathbf{P}_k^s]^{-1}$, hence:
\begin{align}
    \frac{\partial \mathbf{m}_{k}^s}{\partial \mathbf{m}_{k}} = \mathbf{P}_k^s \mathbf{P}_k^{-1}
\end{align}
which directly relates the smoothed covariance to the adjoint variable $\bm{\Lambda}_k^+$:
\begin{align}
    \bm{\Lambda}_k^+ &= (\mathbf{I} - \mathbf{P}_k^s \mathbf{P}_k^{-1})^\top \mathbf{P}_k^{-1} \\
    &= \mathbf{P}_k^{-1} (\mathbf{P}_k - \mathbf{P}_k^s) \mathbf{P}_k^{-1}
\end{align}
A similar derivation can be done to determine the relation between the smoothed mean and covariance and the set $(\bm{\lambda}_k^-, \mathbf{\Lambda}_k^-)$.

\section{Derivation: Correction Equations}\label{app:correction_equations}
\citet{Parellier_Speeding_Up_Backprop} derived backward recursions for differentiating scalar objectives through the Kalman filter. We use these results to obtain the gradients of the negative log marginal likelihood with respect to the predicted and filtered covariance matrices, which are not directly provided by the MBF smoothing equations. Note that our definition of the negative log evidence in \cref{eq:JNLL_cost} differs by a factor of $2$ from the one presented in \citet{Parellier_Speeding_Up_Backprop}.

To simplify notation, define
\begin{align}
    \mathbf{B}_k = \mathbf{I} - \mathbf{K}_k \mathbf{H}_k
\end{align}
The corresponding covariance-gradient recursions are then given by \citep{Parellier_Speeding_Up_Backprop}:
\begin{align}
        \frac{\partial J_{1:T}^{\text{NLL}}}{\partial \mathbf{P}_{k}} &=  \mathbf{A}_k^\top  \frac{\partial J_{1:T}^{\text{NLL}}}{\partial \mathbf{P}_{k+1}^-}  \mathbf{A}_k     \\
        \begin{split}
        \frac{\partial J_{1:T}^{\text{NLL}}}{\partial \mathbf{P}_{k^-}}  &=  \mathbf{B}_k^\top \frac{\partial J_{1:T}^{\text{NLL}}}{\partial \mathbf{P}_{k}} \mathbf{B}_k + \frac{1}{2}     \mathbf{H}_k^\top \mathbf{S}_k^{-1}  \mathbf{H}_k \\
    &+ \frac{1}{2} \mathbf{B}_k^\top \bm{\lambda}_k^+  \mathbf{z}_k^\top  \mathbf{R}_k^{-1} \mathbf{H}_k \mathbf{B}_k \\
    &+ \frac{1}{2} \mathbf{B}_k^\top \mathbf{H}_k^\top  \mathbf{R}_k^{-1} \mathbf{z}_k \left( \bm{\lambda}_k^+ \right)^\top  \mathbf{B}_k \\
    &- \frac{1}{2} \mathbf{H}_k^\top  \mathbf{S}_k^{-1} \mathbf{z}_k \mathbf{z}_k^\top  \mathbf{S}_k^{-1} \mathbf{H}_k
    \end{split}
\end{align}
Substituting the decompositions in \crefrange{eq:def_T}{eq:decomposition} into these recursions separates the full covariance gradients into the MBF contribution and a remaining correction term. Using the MBF smoothing equations to cancel the terms involving ($\bm{\Lambda}_k^+, \bm{\Lambda}_k^-$) then yields the correction recurrences in \crefrange{eq:corr_rec_1}{eq:corr_rec_3} for ($\widetilde{\bm{\Lambda}}_k^+, \widetilde{\bm{\Lambda}}_k^-$).

\section{Derivation: Gradient Expressions}\label{app:gradient_expr}
This Appendix derives the individual terms appearing in \crefrange{eq:gradient_expression_conceptual}{eq:gradient_expr}. For compactness, we use index notation and assume summation over repeated indices. 

\subsection{Derivative with respect to $\mathbf{A}_k$}
By the chain rule,
\begin{align}
    \frac{\partial J_{1:T}^{\text{NLL}}}{\partial \left[ \mathbf{A}_k \right]_{ij}}  &= \frac{\partial J_{1:T}^{\text{NLL}}}{\partial \left[ \mathbf{m}_{k+1}^- \right]_p}\frac{\partial \left[ \mathbf{m}_{k+1}^- \right]_p}{\partial [\mathbf{A}_k]_{ij}} \\
    & + \ \frac{\partial J_{1:T}^{\text{NLL}}}{\partial \left[ \mathbf{P}_{k+1}^- \right]_{rs}} \frac{\partial \left[ \mathbf{P}_{k+1}^- \right]_{rs}}{\partial [\mathbf{A}_k]_{ij}}
\end{align}
The first term follows from:
\begin{align}
    \frac{\partial J_{1:T}^{\text{NLL}}}{\partial \left[ \mathbf{m}_{k+1}^- \right]_p} &= \frac{\partial J_{k+1:T}^{\text{NLL}}}{\partial \left[ \mathbf{m}_{k+1}^- \right]_p} \\
    &= [\bm{\lambda}_{k+1}^-]_p \\
    \frac{\partial \left[ \mathbf{m}_{k+1}^- \right]_p}{\partial [\mathbf{A}_k]_{ij}} &= \frac{\partial \left[ \mathbf{A}_k \right]_{pq} \left[ \mathbf{m}_{k} \right]_q}{\partial [\mathbf{A}_k]_{ij}} \\
    &= \delta_{ip} \delta_{jq} \left[ \mathbf{m}_{k} \right]_q \\
    &= \delta_{ip} \left[ \mathbf{m}_{k} \right]_j
\end{align}
Hence,
\begin{align}
    \frac{\partial J_{1:T}^{\text{NLL}}}{\partial \left[ \mathbf{m}_{k+1}^- \right]_p}   \frac{\partial \left[ \mathbf{m}_{k+1}^- \right]_p}{\partial [\mathbf{A}_k]_{ij}} &= [\bm{\lambda}_{k+1}^-]_p \delta_{ip} \left[ \mathbf{m}_{k} \right]_j \\
    &= [\bm{\lambda}_{k+1}^- \mathbf{m}_k^\top]_{ij}
\end{align}
Considering the second term:
\begin{align}
    \frac{\partial J_{1:T}^{\text{NLL}}}{\partial \left[ \mathbf{P}_{k+1}^- \right]_{rs}} &= \frac{\partial J_{k+1:T}^{\text{NLL}}}{\partial \left[ \mathbf{P}_{k+1}^- \right]_{rs}} \\
    &= [\mathbf{T}_{k+1}^-]_{rs} \\
     \frac{\partial \left[ \mathbf{P}_{k+1}^- \right]_{rs}}{\partial [\mathbf{A}_k]_{ij}} &= \frac{\partial ([\mathbf{A}_k]_{rt} [\mathbf{P}_k]_{tu} [\mathbf{A}_k]_{su} + [\mathbf{Q}_k]_{rs})}{\partial [\mathbf{A}_k]_{ij}} \\
     &= \delta_{ri} [\mathbf{P}_k \mathbf{A}_k^\top]_{js} + \delta_{si} [\mathbf{A}_k \mathbf{P}_k]_{rj}
\end{align}
Hence, this simplifies to:
\begin{align}
     \frac{\partial J_{1:T}^{\text{NLL}}}{\partial \left[ \mathbf{P}_{k+1}^- \right]_{rs}} \frac{\partial \left[ \mathbf{P}_{k+1}^- \right]_{rs}}{\partial [\mathbf{A}_k]_{ij}} &= [\mathbf{T}_{k+1}^-]_{rs}  \delta_{ri}  [\mathbf{P}_k \mathbf{A}_k^\top]_{js} \\
     &+ \, [\mathbf{T}_{k+1}^-]_{rs} \delta_{si} [\mathbf{A}_k \mathbf{P}_k]_{rj} \\
     &= 2 [\mathbf{T}_{k+1}^- \mathbf{A}_k \mathbf{P}_k]_{ij}
\end{align}
Therefore,
\begin{align}
    \frac{\partial J_{1:T}^{\text{NLL}}}{\partial\mathbf{A}_k} = \bm{\lambda}_{k+1}^- \mathbf{m}_k^\top +2 \mathbf{T}_{k+1}^- \mathbf{A}_k \mathbf{P}_k
\end{align}
\subsection{Derivative with respect to $\mathbf{Q}_k$}
Again by the chain rule,
\begin{align}
    \frac{\partial J_{1:T}^{\text{NLL}}}{\partial \left[ \mathbf{Q}_k \right]_{ij}}  &= \frac{\partial J_{1:T}^{\text{NLL}}}{\partial \left[ \mathbf{m}_{k+1}^- \right]_p}\frac{\partial \left[ \mathbf{m}_{k+1}^- \right]_p}{\partial [\mathbf{Q}_k]_{ij}} \\
    & + \ \frac{\partial J_{1:T}^{\text{NLL}}}{\partial \left[ \mathbf{P}_{k+1}^- \right]_{rs}} \frac{\partial \left[ \mathbf{P}_{k+1}^- \right]_{rs}}{\partial [\mathbf{Q}_k]_{ij}}
\end{align}
The predicted mean $\mathbf{m}_{k+1}^-$ does not depend on the process noise covariance matrix $\mathbf{Q}_k$, so the first term vanishes. For the second term:
\begin{align}
    \frac{\partial J_{1:T}^{\text{NLL}}}{\partial \left[ \mathbf{P}_{k+1}^- \right]_{rs}} &= [\mathbf{T}_{k+1}^-]_{rs} \\ \frac{\partial \left[ \mathbf{P}_{k+1}^- \right]_{rs}}{\partial [\mathbf{Q}_k]_{ij}} &= \frac{\partial ([\mathbf{A}_k]_{rt} [\mathbf{P}_k]_{tu} [\mathbf{A}_k]_{su} + [\mathbf{Q}_k]_{rs})}{\partial [\mathbf{Q}_k]_{ij}} \\
    &= \delta_{ri} \delta_{sj}
\end{align}
Hence this simplifies to:
\begin{align}
    \frac{\partial J_{1:T}^{\text{NLL}}}{\partial \left[ \mathbf{P}_{k+1}^- \right]_{rs}} \frac{\partial \left[ \mathbf{P}_{k+1}^- \right]_{rs}}{\partial [\mathbf{Q}_k]_{ij}} = [\mathbf{T}_{k+1}^-]_{ij}
\end{align}
Therefore,
\begin{align} \frac{\partial J_{1:T}^{\mathrm{NLL}}} {\partial \mathbf{Q}_k} = \mathbf{T}_{k+1}^-
\end{align}

\subsection{Derivative with respect to $\mathbf{P}_0$}
Finally, by definition of $\mathbf{T}_0^+$,
\begin{align}
    \frac{\partial J_{1:T}^{\text{NLL}}}{\partial \mathbf{P}_{0}} = \mathbf{T}_{0}^+
\end{align}

\subsection{Derivative with respect to $\mathbf{R}_k$}
In this work, we do not consider the derivative of the negative log marginal likelihood w.r.t. the measurement noise covariance $\mathbf{R}_k$. For an expression, we refer to Eq.~(26) in Section III.C of \citep{Parellier_Speeding_Up_Backprop}.

\section{Numerical Analysis}\label{app:flop_count}
This Appendix summarizes the computational and memory requirements of the filtering, smoothing, and hyperparameter learning algorithms considered in this work. \Crefrange{tab:kalman_predictor}{tab:hyperparameter_tuning} report the corresponding floating-point operation (flop) counts, where a flop represents a single operation (e.g. addition, multiplication). Symmetry is exploited whenever applicable, and we assume $m < d$ when choosing operation ordering.

The smoothing methods also differ in their memory requirements. The following analysis assumes that symmetric matrices are stored using only their upper or lower triangular entries. The standard RTS smoother stores the predicted and filtered means and covariances at every step, as well as the transition matrices, resulting in a total memory requirement of:
\begin{align}
    T(2d^2 + 3d)
\end{align}
expressed in the number of stored scalar entries. The Joseph-form RTS smoother additionally requires the process noise covariance matrices, Kalman gains, and observation matrices, leading to:
\begin{align}
    T \frac{1}{2} (5d^2 + 7d + 4md)
\end{align}
The MBF smoother stores the transition matrices, Kalman gains, observation matrices, Cholesky factors of the innovation covariance, and measurement residuals. If the smoothed solution is reconstructed only at $T_\mathrm{sol}$ out of a total of $T$ time steps, the total memory requirement is:
\begin{align}
    T(d^2 + 2dm +\frac{1}{2}m^2 + \frac{3}{2}m) + T_{\text{sol}} \frac{1}{2} (d^2 + 3d)
\end{align}

\begin{table*}[t!]\centering
\caption{Flop counts of the Kalman predictor equations.}
\begin{tabular*}{\textwidth}{@{\extracolsep{\fill}} L{0.7\textwidth} R{0.22\textwidth}}
\toprule
Equation & \multicolumn{1}{r}{Flop Count} \\
\midrule
$\mathbf{m}_k^- = \mathbf{A}_{k-1} \mathbf{m}_{k-1}$
& 2d^2 -d \\
$\mathbf{P}_{k}^- = \mathbf{A}_{k-1} \mathbf{P}_{k-1} \mathbf{A}_{k-1}^\top + \mathbf{Q}_{k-1}$
& 3d^3\\
\bottomrule
\end{tabular*}
\label{tab:kalman_predictor}
\end{table*}

\begin{table*}[t!]\centering
\caption{Flop counts of the Kalman filtering equations.}
\begin{tabular*}{\textwidth}{@{\extracolsep{\fill}} L{0.35\textwidth} R{0.4\textwidth}}
\toprule
Equation &  \multicolumn{1}{r}{Flop Count} \\
\midrule
$\mathbf{z}_k = \mathbf{y}_k - \mathbf{H}_k \mathbf{m}_{k}^-$
& 2dm \\
$\mathbf{S}_k = \mathbf{H}_k \mathbf{P}_{k}^- \mathbf{H}_k^\top + \mathbf{R}_k$
& 2d^2m + dm^2 \\
$\mathbf{K}_k = \mathbf{P}_{k}^- \mathbf{H}_k^\top \mathbf{S}_k^{-1}$
& 2 d^2m + d(2m^2 - m) + \frac{1}{3}m^3 + \frac{1}{2}m^2 + \frac{1}{6}m \\
$\mathbf{m}_{k} = \mathbf{m}_{k}^- + \mathbf{K}_k \mathbf{z}_k$
& 2dm  \\
$\mathbf{P}_{k} = (\mathbf{I} - \mathbf{K}_k \mathbf{H}_k) \mathbf{P}_{k}^-$
& 3 d^2 m \\
Joseph form
& 7d^2m + 2dm^2 \\
\bottomrule
\end{tabular*}
\label{tab:kalman_filter}
\end{table*}

\begin{table*}[t!]\centering
\caption{Flop counts of the RTS smoothing equations.}
\begin{tabular*}{\textwidth}{@{\extracolsep{\fill}} L{0.6\textwidth} R{0.35\textwidth}}
\toprule
Equation &  \multicolumn{1}{r}{Flop Count} \\
\midrule
$\mathbf{J}_k = \mathbf{P}_{k} \mathbf{A}_k^\top [\mathbf{P}_{k+1}^-]^{-1}$
& \frac{13}{3} d^3 - \frac{1}{2}d^2 + \frac{1}{6}d \\
$\mathbf{m}_{k}^s = \mathbf{m}_{k} + \mathbf{J}_k (\mathbf{m}_{k+1}^s - \mathbf{m}_{k+1}^-)$
& 2 d^2 + d \\
$\mathbf{P}_{k}^s =
\mathbf{P}_{k} + \mathbf{J}_k (\mathbf{P}_{k+1}^s - \mathbf{P}_{k+1}^-) \mathbf{J}_k^\top$
& 3 d^3 + \frac{1}{2} d^2 + \frac{1}{2} d \\
Joseph form
& 7d^3 + \frac{5}{2}d^2 + \frac{3}{2}d \\
\bottomrule
\end{tabular*}
\label{tab:rts_smoother}
\end{table*}

\begin{table*}[t!]\centering
\caption{Flop counts of the MBF smoothing equations.}
\begin{tabular*}{\textwidth}{@{\extracolsep{\fill}} L{0.7\textwidth} R{0.25\textwidth}}
\toprule
Equation & \multicolumn{1}{r}{Flop Count} \\
\midrule
$\bm{\lambda}_{k-1}^+ = \mathbf{A}_{k-1}^\top \bm{\lambda}_{k}^-$
& 2 d^2 - d \\
$\mathbf{\Lambda}_{k-1}^+ = \mathbf{A}_{k-1}^\top \mathbf{\Lambda}_k^- \mathbf{A}_{k-1}$
& 3 d^3 - \frac{1}{2} d^2 - \frac{1}{2} d \\
$\bm{\lambda}_k^- =
(\mathbf{I} - \mathbf{K}_k \mathbf{H}_k)^\top \bm{\lambda}_k^+
- \mathbf{H}_k^\top \mathbf{S}_k^{-1} \mathbf{z}_k$
& 6 d m + 2 m^2 - m \\
$\mathbf{\Lambda}_k^- =
(\mathbf{I} - \mathbf{K}_k \mathbf{H}_k)^\top
\mathbf{\Lambda}_k^+
(\mathbf{I} - \mathbf{K}_k \mathbf{H}_k)
+ \mathbf{H}_k^\top \mathbf{S}_k^{-1} \mathbf{H}_k$
& d^2(7m + \frac{1}{2}) + d(4m^2 + 2m + \frac{1}{2}) \\
$\mathbf{m}_{k}^s = \mathbf{m}_{k} - \mathbf{P}_{k} \bm{\lambda}_k^+$
& 2 d^2 \\
$\mathbf{P}_{k}^s = \mathbf{P}_{k} - \mathbf{P}_{k} \mathbf{\Lambda}_k^+ \mathbf{P}_{k}$
& 3d^3 \\
\bottomrule
\end{tabular*}
\label{tab:mbf_smoother}
\end{table*}

\begin{table*}[t!]\centering
\caption{Flop counts of the MBF correction equations.}
\begin{tabular*}{\textwidth}{@{\extracolsep{\fill}} L{0.6\textwidth} R{0.35\textwidth}}
\toprule
Equation & \multicolumn{1}{r}{Flop Count} \\
\midrule
$\mathbf{l}_k = \frac{1}{2} \bm{\lambda}_k^+ \mathbf{z}_k^\top \mathbf{R}_k^{-1} \mathbf{H}_k$ & d(2m^2 + 3m -1) + \frac{1}{3}m^3 + \frac{1}{2}m^2 + \frac{1}{6}m \\
$\widetilde{\mathbf{\Lambda}}_k^- = (\mathbf{I} - \mathbf{K}_k \mathbf{H}_k)^\top (\widetilde{\mathbf{\Lambda}}_k^+ + \mathbf{l}_k + \mathbf{l}_k^\top) (\mathbf{I} - \mathbf{K}_k \mathbf{H}_k) -\frac{1}{2} \mathbf{H}_k^\top \mathbf{S}_k^{-1} \mathbf{H}_k$ \\
$\qquad \ \ \, - \frac{1}{2}\mathbf{H}_k^\top \mathbf{S}_k^{-1} \mathbf{z}_k \mathbf{z}_k^\top \mathbf{S}_k^{-1} \mathbf{H}_k$ & d^2 (7m + 4) + d(2m^2 + 3m + 2) + 2m^2\\ 
 $\widetilde{\mathbf{\Lambda}}_{k-1}^+ = \mathbf{A}_{k-1}^\top \widetilde{\mathbf{\Lambda}}_k^- \mathbf{A}_{k-1}$ & 3 d^3 - \frac{1}{2} d^2 - \frac{1}{2} d \\
\bottomrule
\end{tabular*}
\label{tab:MBF_correction_equations}
\end{table*}

\begin{table*}[t!]\centering
\caption{Flop counts of the gradient components of the negative log marginal likelihood.}
\begin{tabular*}{\textwidth}{@{\extracolsep{\fill}} L{0.7\textwidth} R{0.25\textwidth}}
\toprule
Equation & \multicolumn{1}{r}{Flop Count} \\
\midrule
$\sum_{k=0}^{T-1} \left\langle ( \bm{\lambda}_{k+1}^- \mathbf{m}_{k}^\top + 2\mathbf{T}_{k+1}^- \mathbf{A}_k \mathbf{P}_{k}), \frac{\text{d}\mathbf{A}_k}{\text{d}\theta} \right\rangle$ & T(3d^3 + \frac{5}{2}d^2 + \frac{1}{2}d) \\
$\sum_{k=0}^{T-1} \, \langle \mathbf{T}_{k+1}^-, \frac{\text{d}\mathbf{Q}_k}{\text{d}\theta} \rangle$ & T(d^2 + d) \\
$\left \langle \mathbf{T}_0^+,\frac{\text{d}\mathbf{P}_0}{\text{d}\theta} \right\rangle$ & d^2 + d \\
\bottomrule
\end{tabular*}
\label{tab:hyperparameter_tuning}
\end{table*}

\end{document}